\newtheorem{thm}{Theorem}[section]
\theoremstyle{definition}
\newtheorem{defn}[thm]{Definition}
\theoremstyle{remark}
\newtheorem{rem}[thm]{Remark}
\DeclareMathOperator*{\PLUS}{\oplus}
\begin{document}
\allowdisplaybreaks
\title[Trace formulae for Schrodinger operators on graphs]{Trace formulae for Schrodinger operators on metric graphs with applications to recovering matching conditions}
\author{Yulia Ershova}
\address{Institute of Mathematics, National Academy of Sciences of Ukraine. 01601 Ukraine, Kiev-4,
3, Tereschenkivska st.}
\email{julija.ershova@gmail.com}
\author{Alexander V. Kiselev}
\address{Department of Higher Mathematics and Mathematical Physics,
St. Petersburg State University, 1 Ulianovskaya Street,
St. Petersburg, St. Peterhoff 198504 Russia}
\email{alexander.v.kiselev@gmail.com}
\thanks{The second authors' work was partially supported by the RFBR, grant no. 12-01-00215-a.}
\subjclass[2000]{Primary 47A10; Secondary 47A55}

\begin{abstract}
The paper is a continuation of the study started in \cite{Yorzh1}.
Schrodinger operators on finite compact metric graphs are considered under the assumption that the matching conditions at
the graph vertices are of  $\delta$ type.
Either an infinite series of trace formulae (provided that edge potentials are infinitely smooth) or a finite number of such formulae
(in the cases of $L_1$ and $C^M$ edge potentials) are obtained
which link together two different quantum graphs under the assumption that their spectra coincide.
Applications are given to the problem of recovering matching conditions for a quantum graph based on its spectrum.
\end{abstract}

\keywords{Quantum graphs, Schrodinger operator, Sturm-Liouville
problem, inverse spectral problem, trace formulae, boundary
triples} \dedicatory{To Professor Yu.~S. Samoilenko on the
occasion of his 70th birthday}

\maketitle
\specialsection{Introduction}
In the present paper we focus our attention on the so-called quantum graph, i.e., a metric graph $\Gamma$ coupled with an associated second-order differential operator
acting on the Hilbert space $L^2(\Gamma)$ of square summable functions on the graph with an additional assumption that the functions belonging to
the domain of the operator are coupled by certain matching conditions at the graph vertices. These matching conditions reflect the graph
connectivity and usually are assumed to guarantee self-adjointness of the operator. Recently these operators have attracted a considerable interest
of both physicists and mathematicians due to a number of important physical applications, e.g., to the study of quantum wavequides. Extensive literature
on the subject is surveyed in, e.g., \cite{Kuchment}.

The present paper is devoted to the study of an inverse spectral problem for Schrodinger operators on finite compact metric graphs.
One might classify the possible inverse problems on quantum graphs in the following way.
\begin{itemize}
\item[(i)] Given spectral data, edge potentials and the matching conditions (usually one assumes standard matching conditions, see below), to reconstruct the metric graph;
\item[(ii)] Given the metric graph, edge potentials and spectral data, to reconstruct the matching conditions.
\item[(iii)] Given the metric graph, the spectral data and the matching conditions, to reconstruct the edge potentials.
\end{itemize}
There exists an extensive literature devoted to the problem
\emph{(i)}. To name just a few, we would like to mention the
pioneering works \cite{Roth,Smil1,Smil2} and later contributions
\cite{Kura1,Kura2,Kura3,Kostrykin}. These papers utilize an
approach to the problem \emph{(i)} based on the so-called trace
formula which relates the spectrum of the quantum graph to the set
of closed paths on the underlying metric graph. Different
approaches to the same problem were developed, e.g., in
\cite{pivo,Belishev_tree,Belishev_cycles}. The problem
\emph{(iii)} is the generalization of the classical inverse
problem for Sturm-Liouville operators and thus unsurprisingly has
attracted by far the most interest. We don't plan to dwell on this
any further as it is far beyond the scope of the present paper.

On the other hand, the problem \emph{(ii)} has to the best of our
knowledge surprisingly attracted much less interest. We believe it
was first treated in \cite{Carlson}. In the cited paper the square
of self-adjoint operator of the first derivative was treated (thus
not allowing for either $\delta-$ or $\delta'-$ coupling at the
graph vertices) on a subset of metric graphs (although cyclic
graphs were allowed). Then, after being mentioned in \cite{Kura2},
it was treated in \cite{Avdonin}, but only in the case of star
graphs. Then in our paper \cite{Yorzh1} we suggested an approach
based on the theory of boundary triplets which allowed us to
derive an infinite series of so-called trace formulae for graph
Laplacians.

The present paper is devoted to the analysis of the same problem
\emph{(ii)} in an attempt to generalize results of \cite{Yorzh1}
to the general setting of quantum graphs with summable edge
potentials. Unlike \cite{Avdonin}, we consider the case of a
general connected compact finite metric graph (in particular, this
graph is allowed to possess cycles), but only for two classes of
matching conditions at the graph vertices, namely,
 in either the case of $\delta$ type
matching conditions at the vertices or the case of $\delta'$ type matching conditions (see Section 2 for definitions). The approach utilized
is the same as in our work \cite{Yorzh1}, but the results obtained unsurprisingly look and feel much more involved. The named two classes singled out by us
prove to be physically viable \cite{Exner1, Exner2}.

In contrast to \cite{Avdonin}, where the spectral data used in order to reconstruct the matching conditions it taken to be the Weyl-Titchmarsh M-function (or
Dirichlet-to-Neumann map) of the graph boundary, we use the spectrum of the Schrodinger operator on a graph (counting multiplicities) as the data known to us from the outset.

The approach suggested is based on the celebrated theory of boundary triples \cite{Gor}. The concept of a generalized Weyl-Titchmarsh M-function for a properly chosen
maximal (adjoint to a symmetric, which we refer to as \emph{minimal}) operator allows us to reduce the study of the spectrum of the Schrodinger
operator on a  metric graph to the study of ``zeroes''
of the corresponding finite-dimensional analytic matrix function.
In order to achieve this goal, we surely have to construct an M-function for the whole graph rather than consider the Dirichlet-to-Neumann map pertaining
to the graph boundary. In sharp contrast to the situation of graph Laplacians where we were able to come up with an explicit formula for the M-function, in the
situation of Schrodinger operators we are only able to derive its asymptotic expansion (or rather, the first few terms of the latter if no smoothness
is required of edge potentials). Nevertheless, this limited information still allows us to derive a (finite) series of trace formulae
which link together two different Schrodinger operators on the same graph provided that their spectra coincide. These trace formulae surprisingly only involve the (diagonal) matrices
of coupling constants (i.e., constants appearing in matching conditions) and the asymptotics of diagonal entries of the Weyl-Titchmarsh M-function
of the graph $\Gamma$.

Moreover, the number of trace formulae available to us is limited by the smoothness of edge potentials, i.e., the more derivatives of the latter can be taken, the
more trace formulae come to existence.

The paper is organized as follows.

Section 2 introduces the notation and contains a brief summary of the material on the boundary triples used by us in the sequel. We continue by providing
an explicit  asymptotic expansion of the Weyl-Titchmarsh M-function written down in what we would like to think of as its ``natural'' form
 for the case of $\delta$ type matching conditions.

Section 3 contains our main result, i.e., the trace formulae for quantum graphs with $\delta$ type  matching conditions. As a corollary, we are able to prove that if
all the coupling constants on a quantum graph are identical, then the spectrum of the operator uniquely determines this universal coupling constant.

As for the case of $\delta'$ type matching conditions, we formulate the analogous result without any further discussion as the proof of it can be easily obtained along the same lines.

We would also like to mention that in contrast to our work \cite{Yorzh1}, in the present paper we do not allow the underlying metric graph to possess loops. Although
the machinery developed by us allows for the consideration of graphs with loops and the
corresponding results will have almost the same form, we have refrained from considering them here in view of keeping the paper transparent and
as easily readable as possible.

\specialsection{Boundary triples approach}

\subsubsection*{Definition of the Schrodinger operator on a quantum graph}

In order to define the quantum graph, i.e., the Schrodinger operator on a metric graph, we begin with the following

\begin{defn}
We call $\Gamma=\Gamma(\mathbf{E_\Gamma},\sigma)$ a finite compact metric graph, if it is a collection of a finite non-empty set
$\mathbf{E_\Gamma}$ of finite closed intervals  $\Delta_j=[x_{2j-1},x_{2j}]$,
$j=1,2,\ldots, n$, called \emph{edges}, and of a partition
$\sigma$ of the set of endpoints $\{x_k\}_{k=1}^{2n}$ into $N$ classes, $\mathbf{V_\Gamma}=\bigcup^N_{m=1} V_m$. The equivalence classes
 $V_m$, $m=1,2,\ldots,N$ will be called \emph{vertices} and the number of elements belonging to the set $V_m$ will be called the \emph{valence} of the vertex
$V_m$.
\end{defn}

With a finite compact metric graph $\Gamma$ we associate the Hilbert space
$$L_2(\Gamma)=\PLUS_{j=1}^n L_2(\Delta_j).$$
This Hilbert space obviously doesn't feel the connectivity of the graph, being the same for each graph with the same number of edges of the same lengths.

In what follows, we single out two natural \cite{Exner1} classes of so-called \emph{matching conditions} which lead to a properly defined self-adjoint operator
on the graph $\Gamma$, namely, the matching conditions of $\delta$ and $\delta'$ types. In order to describe these, we will introduce the following notation.
For a smooth enough function $f\in L_2(\Gamma)$, we will use throughout the following definition of the normal derivative on a finite compact metric graph:
$$\partial_n f(x_j)=\left\{ \begin{array}{ll} f'(x_j),&\mbox{ if } x_j \mbox{ is the left endpoint of the edge},\\
-f'(x_j),&\mbox{ if } x_j \mbox{ is the right endpoint of the edge.}
\end{array}\right.$$

\begin{defn}\label{def_matching} If $f\in \PLUS_{j=1}^n W_2^2 (\Delta_j)$ and $\alpha_m$ is a complex number (referred to below as a coupling constant),
\begin{itemize}
\item[\textbf{($\delta$)}] the condition of continuity of the function $f$
through the vertex $V_m$ (i.e., $f(x_j)=f(x_k)$ if $x_j,x_k\in V_m$) together with the condition
$$
\sum_{x_j \in V_m} \partial _n f(x_j)=\alpha_m f(V_m)
$$
is called $\delta$-type matching at the vertex $V_m$;
\item[\textbf{($\delta'$)}] the condition of continuity of the normal derivative $\partial_n f$
through the vertex $V_m$ (i.e., $\partial_n f(x_j)=\partial_n f(x_k)$ if $x_j,x_k\in V_m$) together with the condition
$$
\sum_{x_j \in V_m} f(x_j)=\alpha_m \partial_n f(V_m)
$$
is called $\delta'$-type matching at the vertex $V_m$;
\end{itemize}
\end{defn}

\begin{rem}
Note that the $\delta$-type matching condition in a particular case when $\alpha_m=0$ reduces to the so-called standard, or Kirchhoff, matching condition
at the vertex $V_m$.
Note also that at the graph boundary (i.e., at the set of vertices of valence equal to 1) the $\delta$- and $\delta'$-type conditions reduce to the usual 3rd type
ones, whereas the standard matching conditions lead to the Neumann condition at the graph boundary.
\end{rem}

We are all set now to define the  Schrodinger operator on the graph $\Gamma$ with $\delta$- or $\delta'$-type matching conditions.
\begin{defn}\label{def_Laplace}
The Schrodinger operator $A$ on a graph $\Gamma$ with $\delta$-type ($\delta'$-type, respectively) matching conditions is the operator defined by the differential
expression $-\frac{d^2}{dx^2}+q(x)$ on $L_2(\Gamma)$, where real-valued $q(x)|_{\Delta_j}\equiv q_j(x)\in L_1(\Delta_j)$ are referred to as \emph{edge potentials},
in the Hilbert space $L_2(\Gamma)$ on the domain of functions belonging to the Sobolev space $\PLUS_{j=1}^n W_2^2(\Delta_j)$ and satisfying $\delta$-type
($\delta'$-type, respectively) matching conditions at every vertex $V_m$, $m=1,2,\dots,N.$
\end{defn}

\begin{rem}
Note that the matching conditions reflect the graph connectivity: if two graphs with the same edges have different topology, the resulting operators are different.
\end{rem}

Provided that all coupling constants $\alpha_m$, $m=1\dots N$, are real, it is easy to verify that the operator $A$ is self-adjoint in the
Hilbert space $L_2(\Gamma)$ \cite{Exner1,KostrykinS}. Throughout the present paper, we are going to consider this self-adjoint situation only, although it has
to be noted that the approach
developed can be used for the purpose of analysis of the general non-self-adjoint situation as well (under the additional assumption that all edge potentials are
still real-valued).

Clearly, the self-adjoint operator thus defined on a finite compact metric graph has purely discrete spectrum that might accumulate to $+\infty$ only. In order
to ascertain this, one only has to note that the operator considered is a finite-dimensional perturbation in the resolvent sense of the direct sum of Sturm-Liouville
operators on the individual edges.

\begin{rem}
Note that w.l.o.g. each edge $\Delta_j$ of the graph $\Gamma$ can be considered to be an interval $[0,l_j]$, where $l_j=x_{2j}-x_{2j-1}$, $j=1\dots n$ is
the length of the corresponding edge. Indeed, performing the corresponding linear change of variable one reduces the general situation to the one where
all the operator properties depend on the lengths of the edges rather than on the actual edge endpoints. Throughout the present paper we will therefore only consider
this situation.
\end{rem}

\begin{rem}
Note that unlike the case of graph Laplacians, in general case of Schrodinger operators the underlying metric graph has to be thought of as \emph{oriented}, i.e., each
edge $\Delta_i$ has a starting point and an endpoint ($0$ and $l_i$, respectively, following our convention).
\end{rem}

In order to treat the inverse spectral problem (ii) for graph Schrodinger operators, we will first need to get some in-depth information on
 the generalized Weyl-Titchmarsh M-function of the operator considered. The most elegant and straightforward way to do so is in our view
by utilizing the apparatus of boundary triples developed in \cite{Gor,Ko1,Koch,DM}. We briefly recall the results
essential for our work.

\subsubsection*{Boundary triplets and the Weyl-Titchmarsh matrix M-function}

Suppose that $A_{min}$ is a symmetric densely defined closed
linear operator acting in the Hilbert space $H$ ($D(A_{min})\equiv
D_{A_{min}}$ and $R(A_{min})\equiv R_{A_{min}}$ denoting its domain and
range respectively; $D(A_{max})\equiv D_{A_{max}}$,
$R(A_{max})\equiv R_{A_{max}}$ denoting the domain and range of
operator $A_{max}$ adjoint to $A_{min}$). Assume that $A_{min}$ is completely nonselfadjoint (simple)\footnote{It is easy to see
that all the results of the present paper still hold even if the underlying minimal operator is not simple. Nevertheless, the problem of its simplicity
is of an independent interest, and we refer the reader to our paper \cite{Yorzh1} where this question was studied for graph Laplacians. The situation of Schodinger
operators can be analyzed along the same lines.},
i.e., there exists no reducing subspace $H_0$ in $H$ such that the
restriction $A_{min}|H_0$ is a selfadjoint operator in $H_0.$ Further
assume that the deficiency indices of $A_{min}$ (probably
being infinite) are equal: $n_+(A_{min})=n_-(A_{min})\le\infty.$

\begin{defn}[\cite{Gor,Ko1,DM}]\label{BT_def}
Let $\Gamma_0,\ \Gamma_1$ be linear mappings of $D_{A_{max}}$ to
$\mathcal{H}$ -- a separable Hilbert space. The triple $(\mathcal{H},
\Gamma_0,\Gamma_1)$ is called \emph{a boundary triple}
for the operator $A_{max}$ if:
\begin{enumerate}
\item for all $f,g\in D_{A_{max}}$
$$
(A_{max} f,g)_H -(f, A_{max}
g)_H = (\Gamma_1 f, \Gamma_0 g)_{\mathcal{H}}-(\Gamma_0 f,
\Gamma_1 g)_{\mathcal{H}}.
$$
\item the mapping $\gamma$ defined as $f\longmapsto (\Gamma_0 f; \Gamma_1
f),$ $f\in D_{A_{max}}$ is surjective, i.e.,
for all $Y_0,Y_1\in\mathcal{H}$ there exists such $y\in
D_{A_{max}}$ that $\Gamma_0 y=Y_0,\ \Gamma_1 y =Y_1.$
\end{enumerate}
\end{defn}

A boundary triple can be constructed for any operator $A_{min}$ of the class considered. Moreover, the
space $\mathcal H$ can be chosen in a way such that $\dim \mathcal H=n_+=n_-.$

\begin{defn}[\cite{Gor,DM}]
A nontrivial extension ${A}_B$ of the operator $A_{min}$ such that $A_{min}\subset  A_B\subset A_{max}$  is
called \emph{almost solvable} if there exists a boundary triple $(\mathcal{H},
\Gamma_0,\Gamma_1)$ for $A_{max}$ and a bounded linear operator $B$ defined everywhere on
$\mathcal{H}$ such that for every $f\in D_{A_{max}}$
$$
f\in D_{A_B}\text{ if and only if } \Gamma_1 f=B\Gamma_0 f.
$$
\end{defn}

It can be shown that if an extension $A_B$ of $A_{min}$, $A_{min}\subset  A_B\subset A_{max}$, has regular
points (i.e., the points belonging to the resolvent set)
in both upper and lower half-planes of the complex plane, then this extension is almost solvable.

The following theorem holds:
\begin{thm}[\cite{Gor,DM}]\label{old-extra}
Let $A_{min}$ be a closed densely defined symmetric operator with $n_+(A_{min})=n_-(A_{min}),$
let $(\mathcal{H},\Gamma_0,\Gamma_1)$ be a boundary triple of $A_{max}$.
Consider the almost solvable extension
$A_B$ of $A_{min}$ corresponding to the bounded operator $B$
in $\mathcal{H}.$ Then:
\begin{enumerate}
\item $y\in D_{A_{min}}$ if and only if $\Gamma_0 y=\Gamma_1 y=0,$
\item $ A_B$ is maximal, i.e., $\rho( A_B)\not=\emptyset$,
\item $(A_B)^*\subset A_{max},\ (A_B)^*=
A_{B^*},$
\item operator $A_B$ is dissipative if and only if $B$
is dissipative,
\item $(A_B)^*=A_B$ if and only if $B^*=B.$
\end{enumerate}
\end{thm}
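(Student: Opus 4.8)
The plan is to extract everything from the two defining properties of a boundary triple — the abstract Green's identity and the surjectivity of the map $f\mapsto(\Gamma_0 f;\Gamma_1 f)$ — together with the relations $A_{max}=A_{min}^*$ and, since $A_{min}$ is closed, $A_{min}=A_{max}^*$. Four of the five assertions are then essentially bookkeeping, and I would dispatch them first. For \emph{(1)}: if $y\in D_{A_{min}}$ then $(A_{max}y,g)_H=(y,A_{max}g)_H$ for every $g\in D_{A_{max}}$ by the definition of the adjoint, so the left-hand side of Green's identity vanishes and $(\Gamma_1 y,\Gamma_0 g)_{\mathcal H}-(\Gamma_0 y,\Gamma_1 g)_{\mathcal H}=0$ for all such $g$; surjectivity lets $(\Gamma_0 g,\Gamma_1 g)$ run through all of $\mathcal H\times\mathcal H$, forcing $\Gamma_0 y=\Gamma_1 y=0$. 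Conversely, $\Gamma_0 y=\Gamma_1 y=0$ makes the right-hand side of Green's identity vanish, so $(A_{max}y,g)_H=(y,A_{max}g)_H$ for all $g$, i.e.\ $y\in D(A_{max}^*)=D_{A_{min}}$.

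For \emph{(3)} the inclusion $A_{min}\subset A_B$ gives $(A_B)^*\subset A_{min}^*=A_{max}$ at once. To identify the domain I substitute $\Gamma_1 f=B\Gamma_0 f$ into Green's identity: the condition $g\in D((A_B)^*)$ reduces to $(\Gamma_0 f,\,B^*\Gamma_0 g-\Gamma_1 g)_{\mathcal H}=0$ for all $f\in D_{A_B}$, and since surjectivity shows that $\Gamma_0$ already maps $D_{A_B}$ onto $\mathcal H$ (given $v$, choose $f$ with $\Gamma_0 f=v$, $\Gamma_1 f=Bv$), this is exactly $\Gamma_1 g=B^*\Gamma_0 g$, i.e.\ $(A_B)^*=A_{B^*}$. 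Assertion \emph{(5)} is then immediate, because $B\mapsto A_B$ is injective by the same surjectivity of $\Gamma_0|_{D_{A_B}}$. For \emph{(4)} I put $g=f$ in Green's identity, obtaining $\mathrm{Im}(A_B f,f)_H=\mathrm{Im}(B\Gamma_0 f,\Gamma_0 f)_{\mathcal H}$, and once more the surjectivity of $\Gamma_0|_{D_{A_B}}$ turns dissipativity of $A_B$ into dissipativity of $B$.

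The genuine content is \emph{(2)}, the existence of a regular point, and this is where I expect the work to lie. My plan is to set up the Weyl machinery. First I would check that $A_0:=A_{max}|_{\ker\Gamma_0}$ is self-adjoint: symmetry follows by reading Green's identity on $\ker\Gamma_0$, and $D(A_0^*)\subset\ker\Gamma_0$ because $\Gamma_1$ maps $\ker\Gamma_0$ onto $\mathcal H$. Hence $\mathbb C\setminus\mathbb R\subset\rho(A_0)$, and for such $\lambda$ the restriction $\Gamma_0|_{N_\lambda}$, with $N_\lambda=\ker(A_{max}-\lambda)$, is a bijection onto $\mathcal H$; I set $\gamma(\lambda)=(\Gamma_0|_{N_\lambda})^{-1}$ and define the Weyl function $M(\lambda)=\Gamma_1\gamma(\lambda)$. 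A direct computation using $\Gamma_0(A_0-\lambda)^{-1}=0$, $\Gamma_0\gamma(\lambda)=I$ and $\Gamma_1\gamma(\lambda)=M(\lambda)$ then verifies Krein's resolvent formula, which shows that $\lambda\in\rho(A_0)$ belongs to $\rho(A_B)$ as soon as $B-M(\lambda)$ is boundedly invertible. Everything thus reduces to producing a single non-real $\lambda$ for which $B-M(\lambda)$ is invertible.

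That last reduction is the main obstacle. The lever is the identity $\mathrm{Im}\,M(\lambda)=(\mathrm{Im}\,\lambda)\,\gamma(\lambda)^*\gamma(\lambda)$, which is sign-definite off the real axis because $\gamma(\lambda)$ is injective; combined with the boundedness of $B$ it forces $\mathrm{Im}(B-M(\lambda))$ to be of one fixed sign and bounded away from zero for non-real $\lambda$ with $|\mathrm{Im}\,\lambda|$ large, whence $B-M(\lambda)$ is invertible there and a regular point exists. Making this uniform control precise is the delicate step, as the lower bound on $\gamma(\lambda)^*\gamma(\lambda)$ must survive the limit $|\mathrm{Im}\,\lambda|\to\infty$; I would secure it from the Nevanlinna (Herglotz) behaviour of $M$ recorded in \cite{Gor,DM}, and note that in the concrete graph setting it can alternatively be read off directly from the asymptotics of $M(\lambda)$ established in Section~2.
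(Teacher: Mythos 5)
This theorem is quoted in the paper from \cite{Gor,DM} as background material on boundary triples; the paper supplies no proof of its own, so your attempt can only be judged on its own merits. Your treatment of assertions \emph{(1)}, \emph{(3)}, \emph{(4)} and \emph{(5)} is correct and is the standard argument: everything does follow from Green's identity, the surjectivity of $f\mapsto(\Gamma_0 f;\Gamma_1 f)$, and the identities $A_{max}=A_{min}^*$, $A_{min}=A_{max}^*$ (the latter using closedness of $A_{min}$). Your reduction of \emph{(2)} to the invertibility of $B-M(\lambda)$ via the self-adjointness of $A_0=A_{max}|_{\ker\Gamma_0}$ and Krein's resolvent formula is also the standard route.

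The gap is in the last step of \emph{(2)}. The claim that $\mathrm{Im}\,M(\lambda)$ dominates $\mathrm{Im}\,B$ uniformly for $|\mathrm{Im}\,\lambda|$ large is not a consequence of the Nevanlinna property and is false in general: writing $\gamma(iy)=(I+(iy-i)(A_0-iy)^{-1})\gamma(i)$ one gets
\begin{equation*}
\mathrm{Im}\,M(iy)=y\,\gamma(iy)^*\gamma(iy)=\gamma(i)^*\,y(A_0^2+1)(A_0^2+y^2)^{-1}\gamma(i),
\end{equation*}
and the spectral function $y(t+1)/(t+y^2)$ is not bounded below by any constant growing in $y$; one can construct Weyl functions of densely defined symmetric operators for which $(\mathrm{Im}\,M(iy)h,h)$ stays bounded as $y\to\infty$. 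What \emph{is} true, and what you should use instead, is that for each \emph{fixed} non-real $\lambda$ the map $\gamma(\lambda)$ is a bounded bijection onto the closed subspace $N_\lambda$, hence boundedly invertible by the open mapping theorem, so $\mathrm{Im}\,M(\lambda)=\mathrm{Im}\,\lambda\cdot\gamma(\lambda)^*\gamma(\lambda)\geq \mathrm{Im}\,\lambda\cdot\varepsilon(\lambda)I$ with $\varepsilon(\lambda)>0$. This immediately closes the argument whenever $\mathrm{Im}\,B$ is sign-definite (take $\lambda$ in the half-plane where $\mathrm{Im}(M(\lambda)-B)$ is uniformly definite), and in particular covers every case the present paper needs, since there $B$ is a real diagonal matrix and \emph{(2)} already follows from \emph{(5)}. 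For a general bounded $B$ with indefinite imaginary part, however, your argument as written does not close; that case genuinely requires a further idea (or a citation to \cite{DM}), and you should not present the uniform lower bound on $\mathrm{Im}\,M$ as something that can be ``read off'' the Herglotz property.
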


The generalized Weyl-Titchmarsh M-function is then defined as follows.
\begin{defn}[\cite{DM,Gor,Koch}]\label{M-def}
Let $A_{min}$ be a closed densely defined symmetric operator,
$n_+(A_{min})=n_-(A_{min}),$ $(\mathcal{H},\Gamma_0,\Gamma_1)$ is its
space of boundary values. The operator-function $M(\lambda),$
defined by
\begin{equation}\label{Weyleq}
M(\lambda)\Gamma_0 f_{\lambda}=\Gamma_1 f_{\lambda},
\ f_{\lambda}\in \ker (A_{max}-\lambda),\  \lambda\in
\mathbb{C}_\pm,
\end{equation}
is called the Weyl-Titchmarsh M-function of a symmetric operator $A_{min}.$
\end{defn}

The following Theorem describing the properties of the M-function clarifies its meaning.
\begin{thm}[\cite{Gor,DM}, in the form adopted in \cite{RyzhovOTAA}]\label{Weyl}
Let $M(\lambda)$ be the M-function of a symmetric operator
$A_{min}$ with equal deficiency indices ($n_+(A_{min})=n_-(A_{min})<\infty$).
Let $A_B$ be an almost solvable extension of
$A_{min}$ corresponding to a bounded operator $B.$ Then for every $\lambda\in
\mathbb{C}:$
\begin{enumerate}
\item $M(\lambda)$ is analytic operator-function when
$Im\lambda\not=0$, its values being bounded linear operators
in $\mathcal{H}.$
\item $(Im\ M(\lambda))Im\ \lambda>0$ when $Im \lambda\not =0.$
\item $M(\lambda)^*=M(\overline{\lambda})$ when $Im \lambda\not =0.$
\item $\lambda_0\in \rho(A_B)$ if and only if $(B-M(\lambda))^{-1}$ admits bounded analytic continuation into the point $\lambda_0$.
\end{enumerate}
\end{thm}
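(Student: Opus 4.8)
The plan is to derive all four assertions from one structural fact --- that for non-real $\lambda$ the boundary map $\Gamma_0$ restricts to a bijection of the deficiency subspace $\ker(A_{max}-\lambda)$ onto $\mathcal H$ --- together with the abstract Lagrange identity built into Definition \ref{BT_def}(1). First I would fix the reference extension $A_0:=A_{max}|_{\ker\Gamma_0}$, which is self-adjoint by the general theory of boundary triples. For $\operatorname{Im}\lambda\neq0$ one has $\lambda\in\rho(A_0)$, whence $\ker(A_{max}-\lambda)\cap\ker\Gamma_0=\{0\}$ and $\Gamma_0|_{\ker(A_{max}-\lambda)}$ is injective; since $\dim\ker(A_{max}-\lambda)$ equals $\dim\mathcal H$ (the common value of the deficiency indices), this restriction is in fact bijective. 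Consequently $M(\lambda)=\Gamma_1\bigl(\Gamma_0|_{\ker(A_{max}-\lambda)}\bigr)^{-1}$ is a well-defined bounded operator on the finite-dimensional space $\mathcal H$, which is the bounded-values half of (1). Choosing a basis of $\ker(A_{max}-\lambda)$ that depends analytically on $\lambda$ (the solutions of the underlying differential equation being entire in the spectral parameter), the maps $\lambda\mapsto\Gamma_0 f_\lambda$ and $\lambda\mapsto\Gamma_1 f_\lambda$ are analytic, and the inverse of an invertible analytic matrix-function is analytic; this gives the analyticity asserted in (1).

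Properties (2) and (3) I would read off from Definition \ref{BT_def}(1) applied to $f_\lambda\in\ker(A_{max}-\lambda)$ and $g_\mu\in\ker(A_{max}-\mu)$. Its left-hand side equals $(\lambda-\bar\mu)(f_\lambda,g_\mu)_H$, whereas by \eqref{Weyleq} its right-hand side equals $\bigl((M(\lambda)-M(\mu)^{*})\Gamma_0 f_\lambda,\Gamma_0 g_\mu\bigr)_{\mathcal H}$. Taking $\mu=\bar\lambda$ annihilates the left-hand side and, since $\Gamma_0$ exhausts $\mathcal H$ on each deficiency subspace, forces $M(\lambda)=M(\bar\lambda)^{*}$, which is (3). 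Taking instead $\mu=\lambda$ and $g_\mu=f_\lambda$ yields $(\operatorname{Im}\lambda)\,\|f_\lambda\|_H^2=\bigl((\operatorname{Im}M(\lambda))\Gamma_0 f_\lambda,\Gamma_0 f_\lambda\bigr)_{\mathcal H}$; because $\Gamma_0 f_\lambda$ vanishes only when $f_\lambda$ does, the quadratic form attached to $\operatorname{Im}M(\lambda)$ carries the sign of $\operatorname{Im}\lambda$, which is (2).

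Property (4) is where the substance lies, and I would base it on Krein's resolvent formula. The point of departure is the observation that $f_\lambda\in\ker(A_B-\lambda)$ exactly when $\Gamma_1 f_\lambda=B\Gamma_0 f_\lambda$, i.e.\ when $(B-M(\lambda))\Gamma_0 f_\lambda=0$, so that non-invertibility of $B-M(\lambda)$ already detects the point spectrum of $A_B$. To promote this to the full resolvent set I would establish, for $\lambda$ in the set $\rho(A_B)\cap\rho(A_0)$ --- nonempty by Theorem \ref{old-extra}(2) and the self-adjointness of $A_0$, and in fact containing every $\lambda\in\mathbb C_\pm$ of sufficiently large imaginary part, where $\operatorname{Im}M(\lambda)$ dominates the bounded $B$ and $B-M(\lambda)$ is therefore invertible --- the resolvent formula
\[
(A_B-\lambda)^{-1}=(A_0-\lambda)^{-1}+\gamma(\lambda)\bigl(B-M(\lambda)\bigr)^{-1}\gamma(\bar\lambda)^{*},
\]
where $\gamma(\lambda):=\bigl(\Gamma_0|_{\ker(A_{max}-\lambda)}\bigr)^{-1}$ is the associated solution operator; verifying it is a direct computation from Definition \ref{BT_def}. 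With it in hand, away from $\sigma(A_0)$ the terms $(A_0-\lambda)^{-1}$, $\gamma(\lambda)$ and $\gamma(\bar\lambda)^{*}$ are analytic, so $(A_B-\lambda)^{-1}$ admits a bounded analytic continuation to $\lambda_0$ if and only if $(B-M(\lambda))^{-1}$ does, which is the equivalence in (4) for such $\lambda_0$.

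The hard part will be the exceptional points $\lambda_0\in\sigma(A_0)$, where $(A_0-\lambda)^{-1}$ and $\gamma(\lambda)$ are themselves singular and the analytic behaviour of $(A_B-\lambda)^{-1}$ cannot simply be read off from Krein's formula. There I would argue directly at the level of the boundary value problem: solving $(A_B-\lambda)u=h$ reduces, after subtracting a particular solution supplied by $A_{max}$, to a finite-dimensional linear equation in $\mathcal H$ whose coefficient operator is precisely $B-M(\lambda)$, and I would show that unique solvability near $\lambda_0$ with a locally uniformly bounded solution operator is equivalent to the bounded analytic continuability of $(B-M(\lambda))^{-1}$ into $\lambda_0$. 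This disposes of the remaining real points and completes (4).
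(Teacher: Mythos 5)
The paper offers no proof of Theorem \ref{Weyl}: it is imported verbatim from \cite{Gor,DM} in the form given in \cite{RyzhovOTAA} and used as a black box, so there is no in-paper argument to compare yours against. Judged on its own terms, your proposal follows the standard route of the cited sources --- the solution operator $\gamma(\lambda)=\bigl(\Gamma_0|_{\ker(A_{max}-\lambda)}\bigr)^{-1}$, the abstract Green identity, and Krein's resolvent formula --- and your derivations of (2) and (3) from the identity $(\lambda-\bar\mu)(f_\lambda,g_\mu)_H=\bigl((M(\lambda)-M(\mu)^{*})\Gamma_0 f_\lambda,\Gamma_0 g_\mu\bigr)_{\mathcal H}$ are complete and correct, as is the bijectivity of $\Gamma_0$ on the deficiency subspaces. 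One local repair is needed in (1): the theorem concerns an \emph{abstract} symmetric operator, so you cannot justify analyticity of $M$ by appealing to "solutions of the underlying differential equation being entire in the spectral parameter" --- there is no differential equation at this level of generality. The correct abstract substitute is the identity $\gamma(\lambda)=\bigl(I+(\lambda-\mu)(A_0-\lambda)^{-1}\bigr)\gamma(\mu)$, which shows $\gamma$, and hence $M(\lambda)=M(\mu)^{*}+(\lambda-\bar\mu)\gamma(\mu)^{*}\gamma(\lambda)$, to be analytic on $\rho(A_0)\supseteq\mathbb C\setminus\mathbb R$. (Your side remark that $\operatorname{Im}M(\lambda)$ "dominates" $B$ for large $|\operatorname{Im}\lambda|$ is delicate for non-self-adjoint $B$ and is not needed: nonemptiness of $\rho(A_B)$ is already part of Theorem \ref{old-extra}.)

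The genuine gap is the final step of (4). For $\lambda_0\in\rho(A_0)$ your Krein-formula argument is sound, but at $\lambda_0\in\sigma(A_0)$ the function $M(\lambda)$ is itself singular, so the "finite-dimensional linear equation whose coefficient operator is precisely $B-M(\lambda)$" is not defined at $\lambda_0$, and your closing paragraph essentially restates the desired equivalence ("I would show that unique solvability \dots is equivalent to the bounded analytic continuability \dots") rather than proving it. This exceptional case is exactly what the Ryzhov formulation adds to the classical Derkach--Malamud statement (which is confined to $\lambda\in\rho(A_0)$), and it is the part requiring an actual argument --- for instance, replacing $A_0$ near such a $\lambda_0$ by another self-adjoint reference extension $A_{B_0}$ with $\lambda_0\in\rho(A_{B_0})$, which transforms $M$ by a linear-fractional map and reduces the question to the regular case already handled. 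As written, your proof establishes (4) only on $\rho(A_0)$.
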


In view of the last Theorem, one is tempted to reduce the study of the spectral properties of the Schrodinger operator on a metric graph to the study
of the corresponding Weyl-Titchmarsh M-function. Indeed, if one considers the operator under investigation as an extension of a properly chosen
symmetric operator defined on the same graph and constructs a boundary triple for the latter, one might utilize all the might of the complex analysis and the
theory of analytic matrix R-functions, since in this new setting the (pure point) spectrum of the quantum Laplacian is located exactly at the points into which
the matrix-function $(B-M(\lambda))^{-1}$ cannot be extended analytically (wagely speaking, these are ``zeroes'' of the named matrix-function).

It might appear as if the non-uniqueness of the space of boundary values and the resulting non-uniqueness of the Weyl-Titchmarsh M-function leads to
some problems on this path; but on the contrary, this flexibility of the apparatus is an advantage of the theory rather than its weakness. Indeed, as we are
going to show below, this allows us to ``separate'' the data describing the metric graph (this information will be carried by the M-function) from the data
describing the matching conditions at the vertices (this bit of information will be taken care of by the matrix $B$ parameterizing the extension). In turn,
this ``separation'' proves to be quite fruitful in view of applications that we have in mind.

Following \cite{Yorzh1}, we proceed with an explicit construction of the ``natural'' boundary triple for quantum graphs.

\subsubsection*{Construction of a boundary triple and asymptotics of the Weyl-Titchmarsh M-function for quantum graphs}
Let $\Gamma$ be a fixed finite compact metric graph. Let us denote by $\partial\Gamma$ the graph boundary, i.e., all the vertices of the graph
which have valence 1. We further assume that at all the vertices the matching conditions are of $\delta$ type.

As the operator $A_{max}$ rather then $A_{min}$ is crucial from the point of view of construction of a boundary triple, we start with this maximal operator and explicitly
describe its action and domain: $A_{max}= - \frac{d^2}{dx^2}+q(x)$,
\begin{equation}\label{DAmax}
D(A_{max})=\left\{ f
\in \bigoplus_{j=1}^n W^2_2(\Delta_j)\ |\ \forall\ V_m \in V_{\Gamma
\backslash
\partial
\Gamma} f \text{ is continuous at }
V_m \right\}.
\end{equation}
\begin{rem}
Note that the operator chosen is not the ``most maximal'' maximal one: one could of course skip the condition of continuity through internal vertices;
nevertheless, the choice made proves to be the most natural from the point of view expressed above. This is exactly due to the fact that the graph connectivity is thus
reflected in the domain of the maximal operator and therefore propels itself into the expression for the M-matrix. Moreover, it should be noted that
this choice is also natural since the dimension of the M-matrix will be exactly equal to the number of graph vertices.
\end{rem}

The choice of the operators $\Gamma_0$ and $\Gamma_1$, acting onto
$\mathbb{C}_N$, $N = |V_\Gamma|$ is made as follows (cf., e.g., \cite{Aharonov} where a similar choice is suggested, but only for the graph boundary):
\begin{equation}\label{BT_formula}\Gamma_0f=\left( \begin{array}{c} f(V_1)\\f(V_2)\\ \ldots\\
f(V_N)
\end{array}\right);\quad
\Gamma_1f=\left( \begin{array}{c} \sum\limits_{x_j:x_j \in V_1} \partial_n f(x_j)\\\sum\limits_{x_j:x_j \in V_2} \partial_n f(x_j)\\ \ldots\\
\sum\limits_{x_j:x_j \in V_N} \partial_n f(x_j)
\end{array}\right).\end{equation}
Here the symbol $f(V_j)$ denotes the value of the function $f(x)$ at the vertex $V_j$. The latter is meaningful because of the choice of the domain of the maximal operator.

One can prove \cite{Yorzh1}, that the triple $(\mathbb{C}_N; \Gamma_0, \Gamma_1)$, $N =
|V_\Gamma|$ is a boundary triple for the operator $A_{max}$ in the sense of Definition \ref{BT_def}.

\begin{rem}
If one considers a graph Laplacian with matching conditions of $\delta'$ type, the choice of the maximal operator and the corresponding
boundary triple  has to change accordingly:
$A_{max}= - \frac{d^2}{dx^2}+q(x)$,
\begin{gather}
D(A_{max})=\left\{ f
\in \bigoplus_{j=1}^{n} W^2_2(\Delta_j) | \forall\ V_m \in
V_{\Gamma \backslash
\partial
\Gamma}\ \partial_n f \text{ is continuous at } V_m \right\}\label{DAmaxp}\\
\Gamma_0 f=\left( \begin{array}{c} \partial_n f(V_1)\\ \partial_n f(V_2)\\ \ldots\\
\partial_n f(V_N)
\end{array}\right);\quad
\Gamma_1f=-\left( \begin{array}{c} \sum\limits_{x_j:x_j \in V_1} f(x_j)\\\sum\limits_{x_j:x_j \in V_2} f(x_j)\\ \ldots\\
\sum\limits_{x_j:x_j \in V_N} f(x_j)
\end{array}\right).\label{BT_formula_p}
\end{gather}
\end{rem}

In order to establish the asymptotic behavior of the
Weyl-Titchmarsh M-matrix in the case of Schrodinger operator on a metric graph,
we first consider the following auxiliary problem on the interval $[0,l]$ (here $l$ will be any of the edge lengths of the graph $\Gamma$.
We have elected to drop the lower index altogether to simplify the notation in the hope that this will not lead to any ambiguity).
\begin{equation} -y'' + q(x)y=k^2y \label{aux_problem}
\end{equation} with $\delta$ type conditions at both ends, which in our setting amounts to
$y'(0)=\beta_1y(0)$; $-y'(l)=\beta_2y(l)$.

The problem of finding the $M$-matrix function asymptotics as $\lambda\to-\infty$ for a given graph $\Gamma$ with
matching conditions of $\delta$ type reduces to finding asymptotics of solutions to our auxiliary problem with boundary conditions $y(0)=0; $
$y(l)=1$ and $y(0)=1;$ $y(l)=0$, respectively. Indeed, given our choice of boundary triple for $\Gamma$, it is sufficient to consider only solutions
$u_j\in Ker(A_{max}-\lambda)$, $j=1,\dots,n$ such that $\Gamma_0 u_j$ is the all-zero vector but for 1 in the j-th position. Then the j-th column of $M(\lambda)$ will
coincide with the vector $\Gamma_1 u_j$ and henceforth, the asymptotics of the corresponding matrix elements will be given by the corresponding asymptotics of
$\Gamma_1 u_j$.

On the other hand, the vector $u_j$ defined above due to the choice of $\Gamma_0$ is nothing but a set of solutions of our auxiliary problem on each edge containing
the vertex $V_j$ (in fact, the solution to the auxiliary problem \eqref{aux_problem} with boundary conditions $y(0)=1;$ $y(l)=0$ if $V_j$ is the left endpoint of
the named edge and with $y(0)=0; $
$y(l)=1$ in the opposite case), whereas on all the other edges it is clearly identically zero if $k^2$ is not in the spectrum of the auxiliary problem on the corresponding
edge with Dirichlet boundary conditions, which is clearly the case when the potential is summable and $k^2$ is sufficiently large negative number.

As we are going to assume $\lambda\to -\infty$, it proves worthwhile to put $\sqrt{\lambda}\equiv k=i\tau$, $\tau
\rightarrow +\infty$. This convention will be used throughout.

Denote by $\phi(x,k)$ the solution to the equation \ref{aux_problem} with Cauchy data $\psi(0,k)=0$, $\psi'(0,k)=1$ (the so-called sine-type
solution). This standard solution clearly exists and satisfies the integral equation \cite{Sargsyan}
\begin{equation} \psi(x,k)=\frac{\sin{kx}}{k}+\frac{1}{k}\int_0^x
\sin{(k(x-t))q(t)\psi(t,k)dt}.\label{int_eq_aux}\end{equation}
We will then put $f_1(x,k)=\frac{\psi(x,k)}{\psi(l,k)}$ for the solution $f_1$ with boundary conditions $f_1(0,k)=0$ and $f_1(l,k)=1$, provided that we are not on
the spectrum of the corresponding Dirichlet problem which can be safely assumed to be granted.

Now assume $\tau$ large enough. Then the only condition that $q$ is summable over the interval leads to the standard way of obtaining the full asymptotical expansion of
$\psi(x,i\tau)$ in $\tau$ based on the first term which is known \cite{Sargsyan} to be $\psi(x,i\tau)=\frac{e^{\tau x}}{2\tau}+O\left(\frac{e^{\tau
x}}{\tau^2}\right)$. For the sake of completeness we briefly recall the corresponding details.

The integral equation \eqref{int_eq_aux} assumes the form
\begin{multline} \psi(x,i\tau)=\frac{e^{\tau
x}}{2\tau}-\\
\frac{e^{-\tau x}}{2\tau}+\frac{e^{\tau
x}}{2\tau}\int_0^x e^{-\tau t}q(t)\psi(t,i\tau)dt-\frac{e^{-\tau
x}}{2\tau}\int_0^x e^{\tau
t}q(t)\psi(t,i\tau)dt=\\
\frac{e^{\tau
x}}{2\tau}+\frac{e^{\tau x}}{2\tau}Y_1-\frac{e^{\tau x}}{2\tau}Y_2
+ o(e^{-\tau x}),
\label{int_eq_1}\end{multline} where $Y_1=\int_0^x e^{-\tau
t}q(t)\psi(t,i\tau)dt$, $Y_2=\int_0^x e^{\tau
(t-2x)}q(t)\psi(t,i\tau)dt$.
Taking the first term asymptotics for $\psi(x,i\tau)$ into account, one immediately obtains:
\begin{gather*}
\frac{e^{\tau x}}{2\tau}Y_1=\frac{e^{\tau x}}{(2\tau)^2} Q(x)+O\left(\frac{e^{\tau x}}{\tau^3}\right),\\
\frac{e^{\tau x}}{2\tau}Y_2=o\left(\frac{e^{\tau x}}{\tau^2}\right),
\end{gather*}
where $Q(x)= \int\limits_0^x q(t) dt$ and
since clearly
$$
Y_2=\frac1{2\tau}\int_0^x e^{2\tau
(t-x)}q(t)dt+O\left(\frac1{\tau^2}\right)\int_0^x e^{2\tau
(t-x)}q(t)dt=o(1/\tau).
$$
Therefore, $$\psi(x,i\tau)=\frac{e^{\tau x}}{2\tau}+\frac{e^{\tau
x}}{(2\tau)^2} Q(x)+o\left(\frac{e^{\tau x}}{\tau^2}\right).$$

Substituting this asymptotic formula into the integral equation \eqref{int_eq_1} again, one now obtains by induction:
\begin{equation}\label{asym_psi}
\psi(x,i\tau)=e^{\tau x}\sum_{j=1}^n \frac{Q^{j-1}(x)}{(j-1)!}\frac 1{{(2\tau)}^j}+ o\left(\frac{e^{\tau x}}{\tau^{n}}\right),
\end{equation}
where the following explicit calculation has been used:
\begin{multline*}
\int_0^x e^{-\tau t}q(t)\left[e^{\tau t}\sum_{j=1}^{n-1} \frac{Q^{j-1}(t)}{(j-1)!}\frac 1{{(2\tau)}^j}+ o\left(\frac{e^{\tau t}}{\tau^{n-1}}\right)\right]dt=\\
\sum_{j=1}^{n-1}\frac1{(j-1)!}\frac 1{{(2\tau)}^j}\int_0^x Q^{j-1}(t) dQ(t)+\int_0^x q(t)
o\left(\frac1{\tau^{n-1}}\right)dt=\\
\sum_{j=1}^{n-1}\frac1{j!}\frac 1{{(2\tau)}^j}Q^j(x)+o\left(\frac1{\tau^{n-1}}\right).
\end{multline*}

Our next task is to compute the corresponding asymptotic expansion for $\psi'(x,k)$. Together with \eqref{asym_psi} this will yield the full asymptotic expansion
in $\tau$ of the solution $f_1(x,i\tau)$ which is sufficient for our purposes (see the definition of $\Gamma_1$ \eqref{BT_formula}).

Differentiating \eqref{int_eq_aux}, one obtains for the named derivative:
\begin{multline}\label{int_eq_2}
\psi'_x(x,i\tau)=cos kx+\int\limits_0^x cos k
(x-t)q(t)\psi(t,k)dt=\\
\frac{e^{\tau
x}}{2}+\frac{e^{\tau x}}{2}\int_0^x e^{-\tau
t}q(t)\psi(t,i\tau)dt+\frac{e^{-\tau x}}{2}\int_0^x e^{\tau
t}q(t)\psi(t,i\tau)dt+O(e^{-\tau x}).
\end{multline}
Putting the asymptotic expansion \eqref{asym_psi} into \eqref{int_eq_2}, we easily obtain from the first two terms:
$$
\frac{e^{\tau
x}}{2}+\frac{e^{\tau x}}{2}\int_0^x e^{-\tau
t}q(t)\psi(t,i\tau)dt=\frac{e^{\tau x}}{2}\sum_{j=0}^n \frac{Q^{j}(x)}{(j)!}\frac 1{{(2\tau)}^j}+ o\left(\frac{e^{\tau x}}{\tau^{n}}\right).
$$
The third term proves to be harder to deal with. What's more, obtaining the full asymptotic expansion of it appears to be only possible under additional
smoothness restrictions imposed on the potential. We proceed as follows.
\begin{multline*}
\frac{e^{-\tau x}}{2}\int_0^x e^{\tau
t}q(t)\left[e^{\tau t}\sum_{j=1}^n \frac{Q^{j-1}(t)}{(j-1)!}\frac
1{{(2\tau)}^j}+ o\left(\frac{e^{\tau
t}}{\tau^{n}}\right)\right]dt =\\
\frac{e^{-\tau x}}{2}\left[\sum_{j=1}^n I_j \frac 1{{(2\tau)}^j}\right] +   o\left(\frac 1{\tau^{n}}\right),
\end{multline*}
where $I_j:=\frac 1{(j-1)!}\int_0^x e^{2\tau
t}q(t)Q^{j-1}(t)dt$.

Denote $Q_j(x)=\frac d{dx}\frac 1j Q^{j}(x)$, then using $Q'(x)=q(x)$ we obtain
$$I_j=\frac 1{(j-1)!}\int_0^x e^{2\tau
t}Q_j(t)dt.$$

If $q\in C^{n-1}([0,l])$, for every $j=1,\dots,n$ and $n\geq 2$ multiple integration by parts yields:
$$I_j=\sum_{m=0}^{n-j-1}e^{2\tau x}\frac{(-1)^m Q_j^{(m)}}{(2\tau)^{m+1}}+o\left(\frac{e^{2\tau x}}{\tau^{n-j}}\right).$$
Thus,
\begin{multline*}
\frac{e^{-\tau x}}{2}\left[\sum_{j=1}^n I_j \frac 1{{(2\tau)}^j}\right]=
\sum_{j=1}^n \frac{e^{\tau x}}{2(j-1)!}\sum_{m=0}^{n-j-1}\frac{(-1)^m Q_j^{(m)(x)}}{(2\tau)^{m+j+1}}=\\
\frac{e^{\tau x}}2 \sum_{k=0}^n \frac 1{(2\tau)^k}\sum_{j=1}^{k-1}\frac{(-1)^{k-j-1}}{(j-1)!}Q_j^{(k-j-1)}(x) + o\left(\frac{e^{\tau x}}{\tau^n}\right).
\end{multline*}
Note, that in the last sum there are actually no terms for $k=0,1$; we have elected to start the summation from $k=0$ for notational convenience reasons.

Now one finally gets the following asymptotic expansion for $\psi'(x,i\tau)$:
\begin{multline}\label{asym_psi_prime}
\psi'(x,i\tau)=\frac 12 e^{\tau x}\sum_{k=0}^n \frac 1{(2\tau)^k}\left[
\frac 1 {k!} Q^k(x)+\sum_{j=1}^{k-1} \frac {(-1)^{k-j-1}}{(j-1)!}Q_j^{(k-j-1)}(x)
\right]+\\
o\left(\frac{e^{\tau x}}{\tau^n}\right),
\end{multline}
where, as before, $Q(x)=\int_0^x q(t)dt$, $Q_j(x)=\frac 1j \frac d{dx} Q^j(x)$ and in particular, $Q_1(x)\equiv q(x)$.

We now turn our attention to finding the asymptotic expansion in $\tau$ of the solution $f_2(x,k)$ of \eqref{aux_problem} satisfying boundary conditions
$f(0,k)=1$, $f(l,k)=0$. Again, this solution is nothing but $f_2(x,k)=\frac{\phi(x,k)}{\phi(0,k)}$, where $\phi$ is the solution of Cauchy problem
with the following data: $\phi(l,k)=0$, $\phi'(l,k)=1$. For $\phi$ one has the following integral equation, similar to \eqref{int_eq_aux}
$$\phi(x,k)=\frac{\sin
(k(x-l))}{k}-\frac1k\int_x^l \sin (k(x-t))q(t)\phi(t,i\tau)dt.$$

Proceeding analogously to the treatment of the solution $\psi(x,k)$, it is not hard to obtain the following asymptotic expansion of the solution $\phi$:
\begin{equation}\label{asym_phi}
\phi(x,i\tau)=-e^{\tau(l-x)}\sum_{j=1}^n \frac{R^{j-1}(x)}{(j-1)!}\frac 1{(2\tau)^j}+o\left(\frac {e^{\tau (l-x)}}{\tau^n}\right).
\end{equation}
Here the folowing notation has been adopted: $R(x):=\int_x^l q(t)dt$.

Moreover, a similar analysis leads to the following asymptotics for the function $\phi'(x,i\tau)$:
\begin{multline}\label{asym_phi_prime}
\phi'(x,i\tau)=\frac 12 e^{\tau (l-x)}\sum_{k=0}^n \frac 1{(2\tau)^k}\left[
\frac 1 {k!} R^k(x)+\sum_{j=1}^{k-1} \frac {1}{(j-1)!}R_j^{(k-j-1)}(x)
\right]+\\
o\left(\frac{e^{\tau (l-x)}}{\tau^n}\right),
\end{multline}
where $R_j(x):=-\frac 1j \frac d{dx} R^j(x)$ and in particular, $R_1(x)\equiv q(x)$.

The analysis carried out above leads to the following
\begin{thm}\label{M-delta}
Let $\Gamma$ be a finite compact metric graph with no loops. Let the operator $A_{max}$ be the Shrodinger operator on the domain \eqref{DAmax} with potentials
$\{q_j\}_{j=1}^n$, $n$ being the number of edges of the graph $\Gamma$. For all $j$ let $q_j\in C^{M}(\Delta_j)$ for some natural $M$. Let
the boundary triple for $A_{max}$ be chosen as $(\mathbb{C}^{N}, \Gamma_0, \Gamma_1)$, where $N$ is the number of vertices of $\Gamma$ and the operators
$\Gamma_0$ and $\Gamma_1$ are defined by \eqref{BT_formula}. Then the generalized Weyl-Titchmarsh M-function is an $N\times N$ matrix with matrix
elements admitting the following asymptotic expansions as $\lambda\to -\infty$.
\begin{equation}
m_{jp}=\begin{cases}
{\scriptstyle \tau \sum\limits_{\Delta_t \in E_j}\sum_{j=0}^{M+1} \frac {b_{j}^{(t)}}{\tau^j} +
   \tau \sum\limits_{\Delta_t \in E'_j}\sum_{j=0}^{M+1} \frac {a_{j}^{(t)}}{\tau^j} + o(\frac 1{\tau^M}),}& {\scriptstyle j=p}, \\
{\scriptstyle o(\frac 1{\tau^{\tilde M}}) \text{ for all } \tilde M>0,}& {\scriptstyle j\neq p, \text{{\scriptsize vertices }} V_j \text{{\scriptsize and }} V_p}\\
& \text{{\scriptsize are connected by an edge}},\\
{\scriptstyle 0},& {\scriptstyle j\neq p,\, \text{{\scriptsize
vertices}}\, V_j\, \mbox{{\scriptsize and}}\, V_p}
\\
&\text{{\scriptsize are not connected by an edge}}.
\end{cases}
\end{equation}
Here $k=\sqrt{\lambda}$ (the branch of the square root is fixed so that $\text{Im } k\geq 0$), $\tau=-ik$;
$E_j$ is the set of graph edges
such that their left endpoints belong to the vertex
$V_j$; $E'_j$ is the set of graph edges
such that their right endpoints belong to the vertex $V_j$; and finally,
$\{a_{j}^{(t)}\}_{j=0,\dots,M+1}$ and $\{b_{j}^{(t)}\}_{j=0,\dots,M+1}$ are two sets of real numbers which are uniquely determined by
the potentials on the edges belonging to $E'_j$ and $E_j$, respectively.
\end{thm}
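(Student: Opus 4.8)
The plan is to read off each column of $M(\lambda)$ directly from the defining relation \eqref{Weyleq}, using the localization of the generating solutions described just before the statement. For the $p$-th column I would take $u_p \in \ker(A_{max}-\lambda)$ normalized by $\Gamma_0 u_p = e_p$, where $e_p$ is the $p$-th standard basis vector of $\mathbb{C}^N$; then the $p$-th column of $M(\lambda)$ is precisely $\Gamma_1 u_p$. For $\tau$ large enough (so that $k^2$ avoids the Dirichlet spectrum of every edge, as discussed above) $u_p$ vanishes identically on each edge not incident to $V_p$, while on an edge incident to $V_p$ it coincides with $f_1=\psi/\psi(l)$ when $V_p$ is the right endpoint and with $f_2=\phi/\phi(0)$ when $V_p$ is the left endpoint.

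The diagonal entry $m_{pp}$ is then the $p$-th component of $\Gamma_1 u_p$, i.e. the sum of the normal derivatives of $u_p$ over all edges meeting $V_p$. An edge in $E_p$ contributes $f_2'(0)=\phi'(0)/\phi(0)$ and an edge in $E'_p$ contributes $-f_1'(l)=-\psi'(l)/\psi(l)$; substituting the expansions \eqref{asym_psi}, \eqref{asym_psi_prime}, \eqref{asym_phi}, \eqref{asym_phi_prime} and dividing the resulting series in $1/\tau$ yields the stated expansion, the $E_p$-edges producing the coefficients $b_j^{(t)}$ and the $E'_p$-edges the coefficients $a_j^{(t)}$. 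Since $\psi'(l)$ and $\phi'(0)$ have leading behaviour of order $e^{\tau l}$ whereas $\psi(l)$ and $\phi(0)$ are of order $e^{\tau l}/\tau$, each quotient starts at order $\tau$, which supplies the prefactor $\tau$; and because $q$ is real all of $Q$, $R$ and their derivatives are real, so every coefficient produced by the division is real and depends only on the restriction of $q$ to the edge that generates it.

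For an off-diagonal entry $m_{pj}$ with $p\neq j$ one takes the $p$-th component of $\Gamma_1 u_j$; the sum defining it ranges over the edges incident to $V_p$, and on any such edge not also incident to $V_j$ the solution $u_j$ vanishes identically, so only edges joining $V_p$ to $V_j$ contribute. If $V_p$ and $V_j$ share no edge the entry is therefore identically $0$, giving the third case with no estimate required. If they are joined by a single edge $\Delta_t$, the surviving term is the normal derivative of $u_j$ at the endpoint lying in $V_p$: when $V_j$ is the left and $V_p$ the right endpoint this equals $-f_2'(l)=-\phi'(l)/\phi(0)=-1/\phi(0)$ by the Cauchy datum $\phi'(l)=1$, and in the opposite configuration it equals $f_1'(0)=\psi'(0)/\psi(l)=1/\psi(l)$. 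Each such quantity is the reciprocal of a function whose leading asymptotics is $e^{\tau l}/(2\tau)$, hence behaves like $2\tau e^{-\tau l}$ and is $o(1/\tau^{\tilde M})$ for every $\tilde M>0$, which is the second case. The Wronskian identity $\psi(l)=-\phi(0)$, constant in $x$ because $\psi$ and $\phi$ solve the same equation, confirms the symmetry $m_{pj}=m_{jp}$.

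The step I expect to absorb most of the effort is the controlled division of asymptotic series in the diagonal case. One must check that factoring out $e^{\tau l}$ and the leading power of $\tau$ turns each of $\phi'(0)/\phi(0)$ and $\psi'(l)/\psi(l)$ into the ratio of two power series in $1/\tau$ with constant term $1$, that this ratio admits an asymptotic expansion to the order allowed by the inputs, and that the accumulated remainder stays $o(1/\tau^M)$. This is exactly where the hypothesis $q\in C^M$ is decisive: the expansion \eqref{asym_psi_prime} of $\psi'$ (and likewise \eqref{asym_phi_prime} of $\phi'$) was produced by repeated integration by parts requiring $q\in C^{n-1}$, so only $n=M+1$ terms are available, and this is precisely what confines the diagonal summation to $j=0,\dots,M+1$ and pins the error at $o(1/\tau^M)$.
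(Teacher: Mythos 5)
Your argument is correct and follows essentially the same route as the paper: read off the $p$-th column of $M(\lambda)$ as $\Gamma_1 u_p$ with $\Gamma_0 u_p=e_p$, use the localization of $u_p$ to the edges incident to $V_p$, and substitute the expansions \eqref{asym_psi}--\eqref{asym_phi_prime} into $\phi'(0)/\phi(0)$ and $-\psi'(l)/\psi(l)$ for the diagonal and into $1/\psi(l)$, $-1/\phi(0)$ for the off-diagonal entries. Your treatment is in fact slightly more careful than the paper's (the sign from the normal-derivative convention in the off-diagonal case, the Wronskian identity $\psi(l)=-\phi(0)$, and the explicit bookkeeping of why the error after dividing the series is $o(1/\tau^{M})$), but the substance is identical.
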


\begin{proof}
The statement follows immediately from asymptotic expansions \eqref{asym_phi},\eqref{asym_phi_prime},\eqref{asym_psi} and \eqref{asym_psi_prime}. Indeed, consider the
vertex $V_j$. Then the $j$-th column of the matrix $M(\lambda)$ is given by $\Gamma_1 u_j$, where $u_j\in Ker(A_{max}-\lambda)$ is such that $\Gamma_0 u_j=
(0,\dots,1,0,\dots,0)$ with 1 in the $j$-th position. As $\lambda$ is assumed to be sufficiently large negative, $u_j$ is identically zero on every edge not belonging
to either $E_j$ or $E'_j$. Then, the matrix elements $m_{ij}$ are identically zero for such vertices $V_i$. Next, if vertices $V_i$ and $V_j$ are connected by an edge,
the corresponding matrix element is either of the form $f'_1(0,k)=\psi'(0,k)/\psi(l,k)\equiv 1/\psi(l,k)$ or of the form
$f'_2(l,k)=\phi'(l,k)/\phi(0,k)\equiv 1/\phi(0,k)$ (depending on the direction of the edge conneting
$V_i$ and $V_j$) and thus decays faster than $1/\tau^M$ for any positive $M$ (see \eqref{asym_phi}, \eqref{asym_psi}). Finally,
the diagonal element $m_{jj}$ is the sum of terms of the form $-f'_1(l,k)=-\psi'(l,k)/\psi(l,k)$ over all edges belonging to $E'_j$ and of terms of the
form $f'_2(0,k)=\phi'(0,k)/\phi(0,k)$ over all edges belonging to $E_j$, from where the claim follows almost immediately.
\end{proof}

We remark that all the constants $\{a_{j}^{(t)}\}_{j=0,\dots,M+1}$ and $\{b_{j}^{(t)}\}_{j=0,\dots,M+1}$ for $t:\ \Delta_t\in E_j$ and $t:\ \Delta_t\in E'_j$, respectively,
appearing in the statement of the latter Theorem, can be computed explicitely based on the asymptotic expansions \eqref{asym_phi},\eqref{asym_phi_prime},\eqref{asym_psi} and \eqref{asym_psi_prime}.
We have elected not to include the corresponding rather trivial but lengthy calculations in the present paper in view of its better readability.

We would also like to point out that essentially the just proven Theorem reads: ``The more smooth the edge potentials are, the more terms of the asymptotic
expansion of the Weyl-Titchmarsh matrix-function one gets''. Comparing this situation with what one faces in the case of graph Laplacian, where one always gets a full
asymptotic expansion, one ends up with the question: ``Is this how the world is made, or is our method of proof deficient?''. In fact, the paper \cite{Savchuk} suggests
that given edge potentials in the class $C^M$ one should be able to get at least one more term in the asymtotics, compared to our answer. However, A.A. Shkalikov \cite{Shkalikov}
comes to the conclusion that this is the end of the story, i.e., a full asymptotic expansion is only possible for infinitely smooth edge potentials. This question
in fact turns to be crucial, see Section 3 below.

Theorem \ref{M-delta} can be made more transparent in two special cases which in view of what follows are of high importance to us. First, we assume that
mean values of all potentials $\{q_j\}_{j=1}^n$ are zero. Then the asymptotic expansions obtained above simplify drastically.

\begin{thm}\label{zeromean}
Let $\Gamma$ be a finite compact metric graph with no loops. Let the operator $A_{max}$ be the Shrodinger operator on the domain \eqref{DAmax} with potentials
$\{q_j\}_{j=1}^n$, $n$ being the number of edges of the graph $\Gamma$. For all $j$ let $q_j\in C^{M}(\Delta_j)$ for some natural $M$. Assume further that the mean values
of all the potentials $q_j$ are zero. Let
the boundary triple for $A_{max}$ be chosen as in Theorem \ref{M-delta}. Then the generalized Weyl-Titchmarsh M-function is an $N\times N$ matrix with matrix
elements admitting the following asymptotic expansions as $\lambda\to -\infty$.
\begin{equation}\label{M-delta2}
m_{jp}=\begin{cases}
{\scriptstyle \sum\limits_{\Delta_t \in E_j}\left(-\tau -\tau\sum_{k=2}^{M+1} \frac 1{(2\tau)^k}\sum_{j=1}^{k-1} \frac 1{(j-1)!} B_{jk}^t\right) +}&\\
    {\scriptstyle+\sum\limits_{\Delta_t \in E'_j}\left(-\tau -\tau\sum_{k=2}^{M+1} \frac 1{(2\tau)^k}\sum_{j=1}^{k-1} \frac {(-1)^{k-j-1}}{(j-1)!} A_{jk}^t\right) +}&\\
    {\scriptstyle+ o(\frac 1{\tau^M}),}& {\scriptstyle j=p}, \\
{\scriptstyle o(\frac 1{\tau^{\tilde M}}) \text{ for all } \tilde M>0,}& {\scriptstyle j\neq p, \text{{\scriptsize vertices }} V_j \text{{\scriptsize and }} V_p}\\
& \text{{\scriptsize are connected by an edge}},\\
{\scriptstyle 0},& {\scriptstyle j\neq p,\, \text{{\scriptsize
vertices}}\, V_j\, \mbox{{\scriptsize and}}\, V_p}
\\
&\text{{\scriptsize are not connected by an edge}}.
\end{cases}
\end{equation}
Here $k=\sqrt{\lambda}$ (the branch of the square root is fixed so that $\text{Im } k\geq 0$), $\tau=-ik$;
$E_j$ is the set of the graph edges
such that their left endpoints belongs to the vertex
$V_j$; $E'_j$ is the set of the graph edges
such that their right endpoints belongs to the vertex $V_j$. Finally,
\begin{gather*}
A_{jk}^t=\left.\frac 1j \frac {d^{k-j}}{dx^{k-j}}\left(\int_0^x q_t(y) dy\right)^j\right|_{x=l_t}\\
B_{jk}^t=-\left.\frac 1j \frac {d^{k-j}}{dx^{k-j}}\left(\int_x^{l_t} q_t(y) dy\right)^j\right|_{x=0}
\end{gather*}
\end{thm}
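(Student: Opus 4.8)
The plan is to specialize the computation behind Theorem \ref{M-delta} to the zero-mean case, where the decisive simplification is that the edge integrals $Q(l_t)=\int_0^{l_t}q_t$ and $R(0)=\int_0^{l_t}q_t$ both vanish. As recorded in the proof of Theorem \ref{M-delta}, the only nontrivial ingredient of the diagonal entry $m_{jj}$ is a sum, over the edges $\Delta_t\in E'_j$, of the logarithmic-derivative-type quantities $-\psi'(l_t,i\tau)/\psi(l_t,i\tau)$, together with a sum, over the edges $\Delta_t\in E_j$, of $\phi'(0,i\tau)/\phi(0,i\tau)$; the off-diagonal entries (between adjacent $V_j,V_p$) and the vanishing entries (between non-adjacent vertices) are identical to those in Theorem \ref{M-delta} and require no further work. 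Hence I would only need to re-evaluate these two ratios under the mean-zero hypothesis.

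First I would substitute $Q(l_t)=0$ into \eqref{asym_psi} and \eqref{asym_psi_prime} evaluated at $x=l_t$. In \eqref{asym_psi} every power $Q^{j-1}(l_t)$ with $j\geq 2$ is annihilated, so the denominator collapses to the single monomial $\psi(l_t,i\tau)=\frac{e^{\tau l_t}}{2\tau}+o(e^{\tau l_t}/\tau^{M+1})$; likewise in \eqref{asym_psi_prime} the entire ``polynomial part'' $\frac1{k!}Q^k(l_t)$ drops out except the constant term $k=0$, leaving only the derivative terms $\sum_{j=1}^{k-1}\frac{(-1)^{k-j-1}}{(j-1)!}Q_j^{(k-j-1)}(l_t)$. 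Since $A_{jk}^t=Q_j^{(k-j-1)}(l_t)$, which follows at once from $Q_j=\frac1j\frac{d}{dx}Q^j$ and the definition of $A_{jk}^t$, dividing the two expansions and expanding the reciprocal of the monomial denominator yields precisely the claimed $E'_j$-contribution $-\tau-\tau\sum_{k=2}^{M+1}(2\tau)^{-k}\sum_{j=1}^{k-1}\frac{(-1)^{k-j-1}}{(j-1)!}A_{jk}^t$. I would then repeat the identical manoeuvre on \eqref{asym_phi} and \eqref{asym_phi_prime} at $x=0$ using $R(0)=0$ and the identity $B_{jk}^t=R_j^{(k-j-1)}(0)$, coming from $R_j=-\frac1j\frac{d}{dx}R^j$; this produces the $E_j$-contribution with $B_{jk}^t$ in place of $A_{jk}^t$ and with the coefficients $\frac{1}{(j-1)!}$ (no alternating sign) inherited from \eqref{asym_phi_prime}. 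Summing over $\Delta_t\in E'_j$ and $\Delta_t\in E_j$ then gives \eqref{M-delta2}.

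The only genuinely delicate step is the control of the remainder when inverting the denominators. Because the mean-zero condition renders $\psi(l_t,i\tau)$ and $\phi(0,i\tau)$ one-term expressions, the reciprocal $1/\psi(l_t,i\tau)=\frac{2\tau}{e^{\tau l_t}}(1+o(1/\tau^{M}))$ carries only a relative error correction, so it introduces no new polynomial coefficients and the listed coefficients are exactly those read off from the numerators; nevertheless one must verify that multiplying the leading factor $\tau$ against this relative error, and against the $o(e^{\tau l_t}/\tau^{M+1})$ tail in \eqref{asym_psi_prime}, still leaves a remainder matching the $o(1/\tau^{M})$ asserted in \eqref{M-delta2}, which forces one to use the full strength of the $o$-estimates in the numerator expansions. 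This bookkeeping, together with the two elementary identities $A_{jk}^t=Q_j^{(k-j-1)}(l_t)$ and $B_{jk}^t=R_j^{(k-j-1)}(0)$, is where the care is concentrated; the structural content of the theorem is entirely contained in the observation that zero mean simultaneously kills the polynomial parts of the numerators and all subleading terms of the denominators.
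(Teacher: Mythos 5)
Your proposal is correct and follows essentially the same route as the paper: the paper's proof consists precisely of the observation that zero mean gives $Q(l_t)=R(0)=0$, which collapses the denominators $\psi(l_t,i\tau)$, $\phi(0,i\tau)$ to single monomials and kills the polynomial part of the numerators in \eqref{asym_psi_prime}, \eqref{asym_phi_prime}, after which the coefficients $A_{jk}^t=Q_j^{(k-j-1)}(l_t)$ and $B_{jk}^t=R_j^{(k-j-1)}(0)$ are read off exactly as you do. Your added bookkeeping of the remainder terms is a correct elaboration of what the paper leaves implicit.
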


\begin{proof}
Again, the statement follows immediately from asymptotic expansions \eqref{asym_phi},\eqref{asym_phi_prime},\eqref{asym_psi} and \eqref{asym_psi_prime}. One only has to take
into account that the condition of zero means for the potentials $q_j$ implies that $Q(l)=R(0)=0$ for all edges belonging to $E'_j$ and $E_j$, respectively.
\end{proof}

We remark \label{emphasis} that under the assumptions of the latter Theorem the first two terms of the asymptotic expansion of $M(\lambda)$ as $\lambda\to-\infty$ turn out to be exactly
the same as in the situation of zero potentials, cf. \cite{Yorzh1}. What's even more revealing, one might prove that the picture remains unchanged even when
the edge potentials are no longer required to have zero means. Moreover, this fact holds in the most general case considered in the present paper,
i.e., in the case when the edge potentials are only assumed to be summable.

\begin{thm}\label{nonsmooth}
Let $\Gamma$ be a finite compact metric graph with no loops. Let the operator $A_{max}$ be the Shrodinger operator on the domain \eqref{DAmax} with potentials
$\{q_j\}_{j=1}^n$, $n$ being the number of edges of the graph $\Gamma$. For all $j$ let $q_j\in L_1(\Delta_j)$.  Let
the boundary triple for $A_{max}$ be chosen as in Theorem \ref{M-delta}. Then the generalized Weyl-Titchmarsh M-function is an $N\times N$ matrix with matrix
elements admitting the following asymptotic expansions as $\lambda\to -\infty$.
\begin{equation}\label{M-delta3}
m_{jp}=\begin{cases}
 {\scriptstyle -\gamma_j \tau + o(1),}& {\scriptstyle j=p}, \\
{\scriptstyle o(\frac 1{\tau^M}) \text { for any } M>0,}& {\scriptstyle j\neq p, \text{{\scriptsize vertices }} V_j \text{{\scriptsize and }} V_p}\\
& \text{{\scriptsize are connected by an edge}},\\
{\scriptstyle 0},& {\scriptstyle j\neq p,\, \text{{\scriptsize
vertices}}\, V_j\, \mbox{{\scriptsize and}}\, V_p}
\\
&\text{{\scriptsize are not connected by an edge}}.
\end{cases}
\end{equation}
Here $k=\sqrt{\lambda}$ (the branch of the square root is fixed so that $\text{Im } k\geq 0$), $\tau=-ik$;
finally, $\gamma_j$ is the valence of the vertex $E_j$.
\end{thm}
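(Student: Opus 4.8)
The plan is to extract only the first two terms of each of the four asymptotic expansions \eqref{asym_psi}, \eqref{asym_psi_prime}, \eqref{asym_phi} and \eqref{asym_phi_prime}, and to verify that these two leading terms are already available under the mere assumption $q\in L_1$, without the smoothness hypothesis needed for the full expansions of the derivatives. Following the proof of Theorem \ref{M-delta}, the diagonal entry $m_{jj}$ is the sum of the quantities $-\psi'(l,k)/\psi(l,k)$ taken over all edges $\Delta_t\in E'_j$ together with the quantities $\phi'(0,k)/\phi(0,k)$ taken over all edges $\Delta_t\in E_j$; the off-diagonal entries corresponding to adjacent vertices are of the form $1/\psi(l,k)$ or $1/\phi(0,k)$, while entries for non-adjacent vertices vanish identically by the support properties of the generating solutions $u_j$.

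First I would record the two leading terms. From \eqref{asym_psi} one has $\psi(l,i\tau)=\frac{e^{\tau l}}{2\tau}\bigl(1+\frac{Q(l)}{2\tau}+o(\frac1\tau)\bigr)$, and a parallel truncation of \eqref{int_eq_2} gives $\psi'(l,i\tau)=\frac{e^{\tau l}}{2}\bigl(1+\frac{Q(l)}{2\tau}+o(\frac1\tau)\bigr)$. The decisive point is that the two bracketed factors share the identical first two terms $1+\frac{Q(l)}{2\tau}$; hence on forming the ratio the $Q(l)$ contribution cancels and $-\psi'(l,i\tau)/\psi(l,i\tau)=-\tau+o(1)$, irrespective of the mean value of the potential. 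An entirely analogous computation built on \eqref{asym_phi} and \eqref{asym_phi_prime}, using $R(0)=\int_0^l q$, yields $\phi'(0,i\tau)/\phi(0,i\tau)=-\tau+o(1)$. Summing over the edges meeting $V_j$ and recalling that, in the absence of loops, each incident edge contributes to exactly one of $E_j$, $E'_j$, I obtain $m_{jj}=-\gamma_j\tau+o(1)$ with $\gamma_j$ the valence.

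The off-diagonal adjacent entries are then immediate: since $\psi'(0,i\tau)\equiv1$ and $\phi'(l,i\tau)\equiv1$ by the Cauchy data, these entries equal $1/\psi(l,i\tau)\sim 2\tau e^{-\tau l}$ or $1/\phi(0,i\tau)\sim -2\tau e^{-\tau l}$, and exponential decay beats every power, so they are $o(1/\tau^M)$ for all $M>0$; the non-adjacent entries are exactly zero exactly as in Theorem \ref{M-delta}.

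The one step requiring genuine care, and the main obstacle, is justifying that the two leading terms of the derivative expansion survive under $L_1$ alone. In \eqref{int_eq_2} the first two summands pose no difficulty, reproducing $\frac{e^{\tau x}}2$ and $\frac{e^{\tau x}}2\cdot\frac{Q(x)}{2\tau}$ after inserting the leading behaviour of $\psi$. The delicate summand is the third, $\frac{e^{-\tau x}}2\int_0^x e^{\tau t}q(t)\psi(t,i\tau)\,dt$, whose full expansion is precisely what forced the smoothness assumption in Theorem \ref{M-delta}. Here, however, I need only its order, not its expansion: substituting $\psi(t,i\tau)\sim e^{\tau t}/(2\tau)$ reduces it to $\frac{e^{\tau x}}{4\tau}\int_0^x e^{2\tau(t-x)}q(t)\,dt$, and since $e^{2\tau(t-x)}\le 1$ on $[0,x]$ with $e^{2\tau(t-x)}\to0$ pointwise for $t<x$, dominated convergence with dominant $|q|\in L_1$ shows the integral tends to $0$, making this term $o(e^{\tau x}/\tau)$. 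Thus the $Q(l)$-level accuracy of both $\psi'(l)$ and $\psi(l)$, and the ensuing cancellation, persist for merely summable potentials, which is exactly what the statement demands.
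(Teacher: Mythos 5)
Your proposal is correct and follows essentially the same route as the paper: the paper's own (very terse) proof consists precisely of the observation that only the first two terms of the expansions \eqref{asym_psi}, \eqref{asym_psi_prime}, \eqref{asym_phi}, \eqref{asym_phi_prime} are needed, and that for these two terms the integrations by parts (the only place smoothness enters) are not required, so $L_1$ suffices. You supply the details the paper omits — in particular the dominated-convergence estimate showing the third summand of \eqref{int_eq_2} is $o(e^{\tau x}/\tau)$ for merely summable $q$, and the cancellation of the matching $Q(l)/(2\tau)$ (resp.\ $R(0)/(2\tau)$) terms in the ratios — and these are exactly the right points to check.
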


The \emph{proof} is again a straightforward computation based on asymptotic expansions \eqref{asym_phi},\eqref{asym_phi_prime},\eqref{asym_psi} and \eqref{asym_psi_prime}.
One has to notice that in order to get the first two terms of asymptotic expansions \eqref{asym_phi_prime} and \eqref{asym_psi_prime}, which are the only necessary terms
to write down the asymptotic formula for $M(\lambda)$ up to $o(1)$ as $\lambda\to -\infty$, all integrations by parts become redundant and thus there is no need
to assume that the edge potentials are smooth.

Notice also that Theorems \ref{zeromean} and \ref{nonsmooth} in fact state that the first \emph{two} terms of the asymptotic expansion of Titchmarsh-Weyl M-matrix
($O(\tau)$ and $O(1)$) do not ``feel'' the edge potentials as both are exactly the same as in the case of a graph Laplacian. This fact will be a cornerstone of the
analysis presented in the next Section.

\specialsection{Trace formulae for a pair of Schrodinger operators on the same metric graph}
In the present section, we apply the mathematical apparatus developed in Section 2 in order to study isospectral (i.e., having the same spectrum, counting
multiplicities) Schrodinger operators defined on a finite compact metric graph $\Gamma$. In order to do so, we will assume that the graph itself is given. Moreover,
we will assume that the matching conditions at all its vertices are of $\delta$ type ($\delta'$ type case can be treated analogously, and we formulate the corresponding
result towards the end of the Section).

\begin{thm}\label{trace}
Let $\Gamma$ be a finite compact metric graph with no loops having $N$ vertices. Let $A_{B_1}$ and $A_{B_2}$ be two Schrodinger operators
on the graph $\Gamma$ with $\delta$-type
matching conditions ($B_1=\text{diag}\{\tilde \alpha_1,\dots,\tilde \alpha_N\}$ and $B_2=\text{diag}\{ \alpha_1$, $\dots$, $\alpha_N\}$, where
both sets $\{\tilde \alpha_m\}$ and $\{\alpha_m\}$ are the sets of coupling constants in the sense of Definition \ref{def_matching}). Let the
edge potentials $q_j$ be the
same for both Schrodinger operators, and let further $q_j\in C^M(\Delta_j)$ for all $j=1,\dots,n$ for some integer constant $M \geq 0$ (for notational convenience,
we imply $C^0\equiv L^1$).
Let the (point) spectra of these two operators (counting multiplicities) be equal, $\sigma(A_{B_1}) = \sigma(A_{B_2})$.

Then the following formula holds for every $s=1,\dots,M+1$:
\begin{multline}\label{finite_set}
\sum_{i=1}^N\sum_{m=1}^{M+1}\sum_{\sum_{p=1}^{M+1}j_p=m; \sum_{p=1}^{M+1}p j_p=s}\frac {(-1)^{m-j_1}m!}{j_1!\cdots j_{M+1}!}\frac 1{\gamma_i^m}\tilde\alpha_i^{j_1}(\delta_1^{(i)})^{j_2}\cdots (\delta_M^{(i)})^{j_{M+1}}=\\
\sum_{i=1}^N\sum_{m=1}^{M+1}\sum_{\sum_{p=1}^{M+1}j_p=m; \sum_{p=1}^{M+1}p j_p=s}\frac {(-1)^{m-j_1}m!}{j_1!\cdots j_{M+1}!}\frac 1{\gamma_i^m}\alpha_i^{j_1}(\delta_1^{(i)})^{j_2}\cdots (\delta_M^{(i)})^{j_{M+1}}
\end{multline}
where $\gamma_i$, $i=1,\dots,N$ are valences of the vertices in the graph $\Gamma$; real constants $\delta_k^{(i)}$ are uniquely determined by the underlying
metric graph and the edge potentials.
\end{thm}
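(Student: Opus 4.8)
The plan is to convert the hypothesis $\sigma(A_{B_1})=\sigma(A_{B_2})$ into an identity between two finite‑dimensional analytic matrix functions and then read off \eqref{finite_set} by matching asymptotic coefficients. By Theorem~\ref{Weyl}(4), for each $j=1,2$ the point spectrum of $A_{B_j}$ coincides, with multiplicities, with the set of points at which $(B_j-M(\lambda))^{-1}$ fails to admit a bounded analytic continuation; away from the poles of $M$ these are precisely the zeros of the scalar function $D_j(\lambda):=\det\bigl(M(\lambda)-B_j\bigr)$. First I would record that $M(\lambda)$ is common to both operators, since by \eqref{BT_formula} it depends only on the graph and the edge potentials and not on the coupling constants; hence $D_1$ and $D_2$ share the same poles, arising from the Dirichlet spectra of the edges, with the same orders. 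Consequently the ratio
$$\Phi(\lambda):=\frac{\det\bigl(M(\lambda)-B_1\bigr)}{\det\bigl(M(\lambda)-B_2\bigr)}$$
has all its poles cancelled and, by the isospectrality assumption, all its zeros cancelled as well, so that $\Phi$ extends to an entire, zero‑free function.

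The hard part will be to upgrade ``entire and zero‑free'' to ``constant.'' Since the eigenvalues obey $\lambda_n\sim c\,n^2$, the functions $D_1,D_2$ are, after clearing poles, entire of order $1/2$, and two such functions with identical zero sets differ by a factor $e^{g}$ with $g$ a polynomial of degree zero, i.e.\ a constant, by Hadamard factorization. I would make this rigorous by controlling the growth of $\Phi$ in the whole plane; this is where the genuine analytic work lies and is the step I expect to be the main obstacle, as it needs the counting‑function/growth estimates for the zeros and poles rather than merely their location. Granting it, and using that $\Phi(\lambda)\to 1$ along the negative real axis because the leading terms of numerator and denominator agree (cf.\ Theorems~\ref{M-delta} and~\ref{nonsmooth}), one obtains $\Phi\equiv1$, i.e.\ $\det(M-B_1)\equiv\det(M-B_2)$ as meromorphic functions. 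An equivalent and perhaps more economical route is Krein's resolvent formula: isospectrality gives $\mathrm{tr}\,(A_{B_1}-\lambda)^{-1}=\mathrm{tr}\,(A_{B_2}-\lambda)^{-1}$, which after the standard reduction becomes $\mathrm{tr}\bigl[(M-B_1)^{-1}M'(\lambda)\bigr]=\mathrm{tr}\bigl[(M-B_2)^{-1}M'(\lambda)\bigr]$, the logarithmic derivative of the previous identity.

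Finally I would extract the formulae by passing to the regime $\lambda\to-\infty$. By Theorem~\ref{M-delta} the off‑diagonal entries of $M(\lambda)$ are $o(\tau^{-\tilde M})$ for every $\tilde M$, so in the expansion in powers of $1/\tau$ the determinant reduces to the product of its diagonal entries; thus the identity above becomes, modulo exponentially small terms,
$$\prod_{i=1}^N\bigl(m_{ii}(\tau)-\tilde\alpha_i\bigr)=\prod_{i=1}^N\bigl(m_{ii}(\tau)-\alpha_i\bigr)$$
as formal Laurent series in $\tau$. Writing the diagonal asymptotics of Theorem~\ref{M-delta} in the normalized form $m_{ii}(\tau)=-\gamma_i\tau-\sum_{k\ge1}\delta_k^{(i)}\tau^{-k}$ (so that the $\delta_k^{(i)}$ are the promised real constants determined by the graph and the potentials, the leading valence term $-\gamma_i\tau$ being supplied by Theorem~\ref{nonsmooth}), I would substitute this expansion and equate the coefficients of $\tau^{-s}$ for $s=1,\dots,M+1$ --- precisely the orders to which the diagonal expansion is available for $C^M$ potentials, which is exactly what makes the number of formulae finite (for $M=0$, i.e.\ $C^0\equiv L^1$, only $s=1$ survives). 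The monomials that do not involve any coupling constant are identical on the two sides and cancel, leaving only the coupling‑dependent combinations; collecting, at each order $s$, the contributions of the constrained partitions $\sum_p j_p=m$, $\sum_p p\,j_p=s$ reproduces the multinomial coefficient $\tfrac{(-1)^{m-j_1}m!}{j_1!\cdots j_{M+1}!}$ of \eqref{finite_set}. This last identification is routine but lengthy, and I would not grind through it here.
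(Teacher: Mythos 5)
Your proposal is correct and follows essentially the same route as the paper: isospectrality makes $\det(M-B_1)/\det(M-B_2)$ entire and zero-free, a Hadamard/Liouville argument (the paper bounds the order by $1$ and then kills the resulting factor $e^{a\lambda+b}$ using the $\lambda\to-\infty$ asymptotics, rather than invoking order $1/2$ directly) forces it to be identically $1$, and the trace formulae are then read off by reducing the determinant to the product of diagonal entries and matching coefficients of $\tau^{-s}$, $s=1,\dots,M+1$. The growth-control step you flag as the main obstacle is handled in the paper at the same level of detail (a citation to Levin for normal type and finite order), so nothing essential is missing.
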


\begin{rem}
Note that the formulae \eqref{finite_set} are a system of non-linear equations linking together two sets of coupling constants, $\{\tilde\alpha_i\}$ and
$\{\alpha_i\}$. In general, we are unable to tell whether or not these for some $M$ large enough reduce to the uniqueness result, i.e., $\alpha_i=\tilde
\alpha_i$ $\forall i$. This appears generally not true even in the much simpler case of quantum Laplacian, although in the latter case the infinite series of
such formulae can be shown to almost yield uniqueness.
\end{rem}

\begin{proof}
We will use the apparatus developed in Section 2. Namely, we choose the maximal operator $A_{max}$ as in \eqref{DAmax}, the boundary triple \eqref{BT_formula} and
use the asymptotic expansion for the Weyl-Titchmarsh M-function of $A_{max}$ obtained in Theorem \ref{M-delta}. Then w.r.t. the chosen boundary triple
the operators $A_{B_1}$ and $A_{B_2}$ are both almost solvable extensions of the operator $A_{min}=A_{max}^*$, parameterized by the matrices
$B_1$ and $B_2$, respectively. Throughout we of course assume that $B_1-B_2\not=0$.

We will now show that provided that the spectra of both given operators coincide, $\det
(B_1-M(\lambda))(B_2-M(\lambda))^{-1}\equiv 1$. This is done by a Liouville-like argument in exactly the same way as in \cite{Yorzh1}. We first verify that
the named determinant is in fact a ratio of two scalar analytic entire functions $F_1$ and $F_2$.
Moreover, by Theorem \ref{Weyl} their fraction $F_1/F_2$ has no poles and no zeroes, since the spectra
of operators $A_{B_1}$ and $A_{B_2}$ coincide.

Now it can be easily ascertained that both $F_1$ and $F_2$ are of normal type and of order at least not greater than 1 \cite{Levin,Sargsyan}. Then
their fraction is again an entire function of order not greater than 1 \cite{Levin}. Finally, by Hadamard's theorem $\frac{F_1}{F_2}=e^{a\lambda+b}$.

It remains to be seen that $a=b=0$. This follows immediately from the asymptotic behaviour of the matrix-function $M(\lambda)$ as $\lambda\to-\infty$ (see
Theorem \ref{nonsmooth}).

We have thus obtained the following identity:
$$
1\equiv\det(B_1-M(\lambda))(B_2-M(\lambda))^{-1}
$$
On the other hand, from Theorems \ref{M-delta} and \ref{nonsmooth} it follows, that within the assumptions of the Theorem the diagonal entries of $M(\lambda)$ admit the
following asymptotic expansion as $\lambda\to-\infty$: for all $i=1,\dots,N$
\begin{equation*}
m_{ii}(\tau)=\gamma_i\tau+\frac {\delta_1^{(i)}}\tau+\dots+\frac{\delta_M^{(i)}}{\tau^M}+o(\frac 1{\tau^M}),
\end{equation*}
where all the coefficients $\delta_k^{(i)}$, $i=1,\dots,N$, $k=1,\dots,M$ are uniquely determined by the metric graph $\Gamma$ and edge potentials. As for non-diagonal
entries, these are irrelevant since they are either identically zero or decaying in $\tau$ faster than any inverse power.
Then
$$
\frac{\det(M(-\tau^2)-B_1)}{\det(M(-\tau^2)-B_2)}=\dfrac{\prod_{i=1}^n (1-\frac{\tilde\alpha_i}{\gamma_i\tau}+\frac{\delta_1^{(i)}}{\gamma_i\tau^2}+\cdots+\frac{\delta_M^{(i)}}{\gamma_i\tau^{M+1}})+o(\frac 1{\tau^{M+1}})}
{\prod_{i=1}^n (1-\frac{\alpha_i}{\gamma_i\tau}+\frac{\delta_1^{(i)}}{\gamma_i\tau^2}+\cdots+\frac{\delta_M^{(i)}}{\gamma_i\tau^{M+1}})+o(\frac 1{\tau^{M+1}})}
$$
Take logarithm of both sides of this identity, taking into account that the fraction of determinants on the left hand side is identically equal to 1. Thus,
all the coefficients of asymptotic expansion of the right hand side of the equation in inverse powers of $\tau$
have to be necessarily equal to zero. A straightforward calculation then completes the proof.

\end{proof}

The formulae \eqref{finite_set} seem to be quite involved; what's even worse, at the first sight they give impression of being formulated in a very implicit form.
In fact, this is not so since by Theorems of the preceding Section one knows how to calculate somewhat cryptic looking constants $\delta_k^{(i)}$ explicitly.
Nevertheless, the next Theorem demonstrates that these formulae are quite usable, at least in the situation when all the coupling constants at all vertices are assumed to be equal.

\begin{thm}
Let $\Gamma$ be a finite compact metric graph with no loops having $N$ vertices. Let $A_{\tilde\alpha}$ and $A_{\alpha}$ be two Schrodinger operators
on the graph $\Gamma$ with $\delta$-type
matching conditions with common for all vertices coupling constants $\tilde\alpha$ and $\alpha$, respectively.
Let all edge potentials\footnote{Note, that we do not need to assume here that the edge potentials are the same for operators
$A_{\tilde\alpha}$ and $A_{\alpha}$.} $q_i^{(1)},q_i^{(2)}\in L_1(\Delta_i)$ for all $i=1,\dots,N.$
Let the (point) spectra of these two operators (counting multiplicities) be equal, $\sigma(A_{B_1}) = \sigma(A_{B_2})$.
Then $\tilde\alpha = \alpha$.
\end{thm}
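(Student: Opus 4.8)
The plan is to run the same Liouville-type / Hadamard factorisation argument as in the proof of Theorem~\ref{trace}, but now keeping in mind that the two operators carry \emph{different} edge potentials and are therefore governed by two \emph{different} Weyl--Titchmarsh M-functions, which I will denote $M_1(\lambda)$ and $M_2(\lambda)$. With respect to the boundary triple \eqref{BT_formula}, the operators $A_{\tilde\alpha}$ and $A_\alpha$ are the almost solvable extensions of the respective minimal operators parameterised by the scalar matrices $B_1=\tilde\alpha I$ and $B_2=\alpha I$. By Theorem~\ref{Weyl}, part (4), the eigenvalues of $A_{\tilde\alpha}$ (resp. $A_\alpha$) are exactly the points at which $(B_1-M_1(\lambda))^{-1}$ (resp. $(B_2-M_2(\lambda))^{-1}$) fails to continue analytically, i.e. the zeros of the entire functions $F_1(\lambda):=\det(M_1(\lambda)-B_1)$ and $F_2(\lambda):=\det(M_2(\lambda)-B_2)$. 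The assumed equality of spectra (counting multiplicities) then means that $F_1/F_2$ is entire with neither zeros nor poles.

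Next I would verify, exactly as in the proof of Theorem~\ref{trace}, that $F_1$ and $F_2$ are each entire of order at most $1$, so that $F_1/F_2$ is again of order at most $1$; Hadamard's theorem then gives $F_1/F_2=e^{a\lambda+b}$. To pin down $a=b=0$ I would use Theorem~\ref{nonsmooth}: both $F_1$ and $F_2$ behave to leading order like $\prod_{i=1}^N\gamma_i\tau$ as $\lambda=-\tau^2\to-\infty$, so their ratio tends to $1$ along the negative real axis; since $\lambda\to-\infty$ there, $e^{a\lambda+b}\to1$ forces first $a=0$ and then $b=0$. Hence $F_1(\lambda)\equiv F_2(\lambda)$.

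The decisive step is to read off the consequence of $F_1\equiv F_2$ from the asymptotics. Here I invoke the feature stressed in Theorem~\ref{nonsmooth} and in the remark following Theorem~\ref{zeromean}: for merely $L_1$ edge potentials the diagonal entries satisfy $m_{ii}^{(\ell)}(\tau)=\gamma_i\tau+o(1)$, where the leading term $\gamma_i\tau$ \emph{and} the vanishing of the $O(1)$ term are independent of the potential, while the off-diagonal entries decay faster than any power of $1/\tau$. Consequently, up to corrections smaller than any power of $1/\tau$,
$$F_\ell(-\tau^2)=\prod_{i=1}^N\bigl(\gamma_i\tau-\beta_\ell+o(1)\bigr)=\Bigl(\prod_{i=1}^N\gamma_i\tau\Bigr)\prod_{i=1}^N\Bigl(1-\frac{\beta_\ell}{\gamma_i\tau}+o(\tfrac1\tau)\Bigr),$$
with $\beta_1=\tilde\alpha$, $\beta_2=\alpha$ (the off-diagonal entries enter only products containing at least two of them and are thus absorbed into the error). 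The prefactor $\prod_i\gamma_i\tau$ depends on the graph alone and cancels from $F_1/F_2\equiv1$; taking logarithms and comparing the coefficients of $1/\tau$ yields $\tilde\alpha\sum_{i=1}^N\gamma_i^{-1}=\alpha\sum_{i=1}^N\gamma_i^{-1}$. Since every valence $\gamma_i$ is a positive integer, $\sum_i\gamma_i^{-1}>0$, and therefore $\tilde\alpha=\alpha$.

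I expect the only genuine subtlety to be organisational rather than analytic: one must check that the Hadamard argument survives the replacement of the single M-function of Theorem~\ref{trace} by the two distinct functions $M_1,M_2$, which it does because the order and the leading asymptotics of $F_1,F_2$ are controlled \emph{separately} by Theorem~\ref{nonsmooth} and do not require the potentials to agree. The conceptual heart of the matter --- and the reason the potentials are allowed to differ --- is precisely that the first two terms of the diagonal M-matrix asymptotics do not feel the potential, so that the coefficient of $1/\tau$ in $\log(F_1/F_2)$, which is all we extract, is built solely from the coupling constants and the valences. This is exactly the $s=1$ instance of the trace formula \eqref{finite_set}, the one trace formula already available at the $L_1$ level $M=0$.
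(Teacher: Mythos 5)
Your proposal is correct and follows essentially the same route as the paper: the authors likewise treat this as the $s=1$ instance of the trace formula \eqref{finite_set}, obtained by the Liouville--Hadamard argument of Theorem~\ref{trace}, and they justify allowing different edge potentials by exactly the observation you highlight, namely that the first two terms $\gamma_i\tau+o(1)$ of the diagonal M-matrix asymptotics in Theorem~\ref{nonsmooth} carry no information about the potentials. Your explicit bookkeeping with two distinct M-functions $M_1,M_2$ merely spells out what the paper leaves implicit.
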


The \emph{proof} of the Theorem is a direct corollary of Theorem \ref{nonsmooth} and is essentially a simplified version of the proof of previous Theorem.
In the case considered one actually only needs the first formula of \eqref{finite_set} ($s=1$), which reads
$$
\sum_{i=1}^N \frac{\tilde \alpha}{\gamma_i} =\sum_{i=1}^N \frac{\alpha}{\gamma_i}
$$
(exactly the same as in the case of graph Laplacians) and quite obviously yields the claim.
The fact that
the edge potentials can be safely assumed to be different for the two operators considered follows from the fact emphasized on the page \pageref{emphasis},
that the first two terms of the asymptotic expansion of $M(\lambda)$ at minus infinity are absolutely the same as in the case of graph Laplacian, i.e.,
carry absolutely no information on the edge potentials.

For the sake of completeness, we formulate without a proof (which can be however quite easily obtained along the line of arguments presented in the present
paper, starting with the definition of a maximal operator \eqref{DAmaxp} and of an associated ``natural'' boundary triple \eqref{BT_formula_p}) the corresponding result
for the case of Schrodinger operators on metric graphs with $\delta'$ type matching conditions.

\begin{thm}
Let $\Gamma$ be a finite compact metric graph with no loops having $N$ vertices. Let $A_{\tilde\alpha}$ and $A_{\alpha}$ be two Schrodinger operators
on the graph $\Gamma$ with $\delta'$-type
matching conditions with common for all vertices coupling constants $\tilde\alpha$ and $\alpha$, respectively.
Let all edge potentials\footnote{As before, we do not need to assume here that the edge potentials are the same for operators
$A_{\tilde\alpha}$ and $A_{\alpha}$.} $q_i^{(1)},q_i^{(2)}\in L_1(\Delta_i)$ for all $i=1,\dots,N$.
Let the (point) spectra of these two operators (counting multiplicities) be equal, $\sigma(A_{B_1}) = \sigma(A_{B_2})$.
Then $\tilde\alpha = \alpha$.
\end{thm}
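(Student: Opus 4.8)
The plan is to repeat, \emph{mutatis mutandis}, the argument behind Theorem~\ref{trace} and the preceding uniqueness theorem, but built on the $\delta'$ objects \eqref{DAmaxp}--\eqref{BT_formula_p} rather than the $\delta$ ones. First I would fix the maximal operator $A_{max}=-d^2/dx^2+q$ on the domain \eqref{DAmaxp} (normal derivatives continuous through interior vertices) together with the boundary triple $(\mathbb{C}^N,\Gamma_0,\Gamma_1)$ of \eqref{BT_formula_p}, where $\Gamma_0 f$ collects the vertex normal derivatives and $\Gamma_1 f$ collects minus the sums of vertex values. With respect to this triple the $\delta'$ condition $\sum_{x_j\in V_m} f(x_j)=\alpha_m\,\partial_n f(V_m)$ is precisely $\Gamma_1 f=B\Gamma_0 f$ with $B=-\mathrm{diag}(\alpha_m)$; hence $A_{\tilde\alpha}$ and $A_\alpha$ are the almost solvable extensions corresponding to the scalar matrices $B_1=-\tilde\alpha I$ and $B_2=-\alpha I$, and Theorem~\ref{Weyl} applies verbatim.

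Next I would establish the $\delta'$ analogue of Theorem~\ref{nonsmooth}. Building the $j$-th column of $M(\lambda)$ now amounts to solving, on each edge incident to $V_j$, the \emph{Neumann-type} auxiliary problem (prescribed $y'(0)$ or $y'(l)$, the opposite endpoint carrying a vanishing normal derivative), which is the Dirichlet--Neumann dual of the problem treated in Section~2. Using the same integral-equation asymptotics for $\psi,\psi',\phi,\phi'$ as $\sqrt{\lambda}=k=i\tau$, $\tau\to+\infty$, the relevant quantities become ratios of the type $\psi'(l)/\theta'(l)$ and $\theta(l)/\theta'(l)$, where $\theta$ is the cosine-type solution; a short computation then gives, for every vertex, $m_{jj}(\tau)=\gamma_j/\tau+o(1/\tau)$, while the off-diagonal entries are either identically zero or decay faster than any inverse power of $\tau$, exactly as in Theorem~\ref{M-delta}. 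The essential point, to be checked carefully, is that the leading coefficient $\gamma_j$ is the valence of $V_j$ and is \emph{potential-independent}, the edge potentials first entering at order $1/\tau^2$; this is the $\delta'$ counterpart of the observation emphasised on page~\pageref{emphasis}, and it is what allows the two operators to carry \emph{different} edge potentials.

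With these asymptotics in hand I would run the Liouville/Hadamard argument of Theorem~\ref{trace}. Clearing the poles of $M$, now located at the Neumann eigenvalues of the edges, by the decoupled edge determinants produces two entire characteristic functions $F_1,F_2$ of order at most $1$ whose zero sets are $\sigma(A_{\tilde\alpha})$ and $\sigma(A_\alpha)$; since these spectra coincide, $F_1/F_2$ is entire and zero-free, so by Hadamard's theorem $F_1/F_2=e^{a\lambda+b}$. Denoting by $M_1,M_2$ the M-functions of the two operators (whose diagonal leading terms $\gamma_i/\tau$ coincide), I note that because the $\delta'$ M-function now \emph{vanishes} as $\lambda\to-\infty$, the ratio $\det(B_1-M_1)/\det(B_2-M_2)$ tends there to the finite nonzero constant $(\tilde\alpha/\alpha)^N$; comparing with $e^{a\lambda+b}=e^{-a\tau^2+b}$ forces $a=0$, so the ratio is constant in $\lambda$.

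Finally I would expand this constant. Writing $\det(B_1-M_1)\sim\prod_i(-\tilde\alpha-m_{ii})$ and $\det(B_2-M_2)\sim\prod_i(-\alpha-m_{ii})$ and passing to logarithms, the requirement that $\log(F_1/F_2)$ carry no negative powers of $\tau$ forces the coefficient of $1/\tau$ to vanish; with $m_{ii}\sim\gamma_i/\tau$ this coefficient equals $\big(\frac{1}{\tilde\alpha}-\frac{1}{\alpha}\big)\sum_{i=1}^N\gamma_i$, which is exactly the $\delta'$ form of the $s=1$ trace formula. Since $\sum_{i=1}^N\gamma_i=2n>0$, this yields $\tilde\alpha=\alpha$. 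The main obstacle is the second step: correctly deriving the dual (Neumann) M-function asymptotics and confirming that its leading term is the valence divided by $\tau$ and is insensitive to the edge potentials — this potential-independence is precisely what makes only the first trace formula, and hence the single linear relation above, necessary, so that the conclusion survives even when the two operators carry different potentials.
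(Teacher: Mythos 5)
Your proposal is correct and is precisely the route the paper prescribes: the paper states this theorem without proof, remarking only that it follows the $\delta$-type argument verbatim once one starts from the maximal operator \eqref{DAmaxp} and the boundary triple \eqref{BT_formula_p}, and your derivation of the dual asymptotics $m_{jj}(\tau)=\gamma_j/\tau+o(1/\tau)$ with potential-independent leading term, followed by the Hadamard argument and the vanishing of the $1/\tau$ coefficient, fills in exactly those details. The only point worth adding is the degenerate case where one of $\tilde\alpha,\alpha$ vanishes (your final formula divides by both): there the two determinants have different orders in $\tau$ as $\lambda\to-\infty$, which already contradicts constancy of the ratio, so the conclusion still holds.
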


We have elected to postpone the presentation of examples to our
next paper devoted to the same subject due to the fact that
although the trace formulae derived above bring us very close to
proving uniqueness in the general case of different coupling
constants at the graph vertices, additional information in fact
needs to be taken into account in order to achieve this goal.
Namely, one needs to consider the poles of the matrix-function
$M(\lambda)-B$ and in particular the residues of its determinant.
The corresponding rather lengthy analysis will be presented
elsewhere, accompanied by some examples.

\subsection*{Acknowledgements} The authors express their deep gratitude to Prof. Sergey Naboko for his constant attention
to authors' work and for fruitful discussions. We would also like
to cordially thank Prof. Andrey Shkalikov (Moscow State Univ.) and
Dr. Vladimir Sloushch (St. Petersburg State Univ.) for their input
which was essential for us in the course of preparation of the
manuscript. We would also like to cordially thank our referees for
making some very helpful comments.

\end{document}